\newcommand{\mytilde}{\raise.17ex\hbox{$\scriptstyle\mathtt{‌​\sim}$}}
\begin{document}

\title{Wireless Bidirectional Relaying using Physical Layer Network Coding with Heterogeneous PSK Modulation}

\author{\IEEEauthorblockN{Chinmayananda A.}
\IEEEauthorblockA{Dept. of Electrical Communication Engg.\\ Indian Institute of Science Bangalore, \\Karnataka (India) - 560012\\
Email: chinmaya@ece.iisc.ernet.in}
\and
\IEEEauthorblockN{Saket D. Buch}
\IEEEauthorblockA{Dept. of Electrical Engg.\\
Indian Institute of Science Bangalore, \\Karnataka (India) - 560012\\
Email: saket@ee.iisc.ernet.in}
\and
\IEEEauthorblockN{B. Sundar Rajan}
\IEEEauthorblockA{Dept. of Electrical Communication Engg.\\ Indian Institute of Science Bangalore, \\Karnataka (India) - 560012\\
Email: bsrajan@ece.iisc.ernet.in}}

\newtheorem{theorem}{Theorem}
\newtheorem{definition}{Definition}
\newtheorem{lemma}{Lemma}
\newtheorem{example}{Example}

\maketitle

\begin{abstract}
In bidirectional relaying using Physical Layer Network Coding (PLNC), it is generally assumed that users employ same modulation schemes in the Multiple Access phase. However, as observed by Zhang et al.  \cite{zhang2016design}, 
it may not be desirable for the users to always use the same modulation schemes, particularly when user-relay channels are not equally strong. Such a scheme is called Heterogeneous PLNC. However, the approach in \cite{zhang2016design} uses the computationally intensive Closest Neighbour Clustering (CNC) algorithm to find the network coding maps to be applied at the relay. Also, the treatment is specific to certain cases of heterogeneous modulations. In this paper, we show that, when users employ heterogeneous \emph{symmetric}-PSK modulations, the network coding maps and the mapping regions in the \emph{fade state} plane can be obtained analytically. Performance results are provided in terms of Relay Error Rate (RER) and Bit Error Rate (BER).  

\end{abstract}
\IEEEpeerreviewmaketitle
\section{Introduction}
 A bidirectional or two-way relaying scenario consists of nodes A and B wanting to exchange data using a relay R. If the relay performs PLNC, the relaying has two phases: Multiple Access (MA) phase and Broadcast (BC) phase \cite{koike2009optimized}. In the MA phase, users transmit signals simultaneously to the relay. The signal received at the relay is a noisy sum of transmitted signals scaled by their respective channel coefficients. The relay applies a many-to-one map on the received symbol, such that the users can decode the desired message, given their own message and a knowledge of the map. The salient feature of this scheme is that the mapping depends on the \emph{fade  state} of user-relay channels. Hence, the complex plane representing the ratio of channel coefficients (or fade state) has to be partitioned to indicate which map is to be used in a given region. Relaying carried out in this manner is highly efficient compared to conventional relaying and Network Coding at bit-level. \cite{koike2009optimized}.

 Both users are assumed to apply the same constellation during the MA phase \cite{koike2009optimized,muralidharan2013wireless,namboodiri2013physical,zhang2006hot}. However, due to varying channel conditions (i.e. average SNR), each link may only be able to support constellations up to a certain cardinality. Such a scheme is known as Heterogeneous PLNC (HePNC) which was proposed by Zhang et al. \cite{zhang2014hepnc, zhang2016design}. Three cases of HePNC viz. QPSK-BPSK, 8PSK-BPSK, and 16QAM-BPSK are investigated in \cite{zhang2016design} and the network coding maps are obtained using CNC algorithm \cite{koike2009optimized}. However, a computationally efficient way of obtaining the network coding maps and the corresponding fade state region boundaries using the concept of \emph{singular fade states} (SFS), is known \cite{muralidharan2013wireless} for \emph{symmetric} PSK modulations. Also, the concept of \emph{clustering independent} regions is not considered in \cite{zhang2014hepnc,zhang2016design}.
 
 In this paper, we provide an analytical approach to the PLNC scheme with heterogeneous PSK modulations by extending the framework of Muralidharan et al.\cite{muralidharan2013wireless}.
The contributions of this paper are:
 \begin{itemize} 
 	\item The analytical framework provided in this paper enables the investigation of two-way relaying using HePNC with PSK modulations. This includes cases like 8PSK-QPSK PLNC which are not considered by Zhang et al.
 	\item The approach taken by Zhang et al. uses a computer search based CNC algorithm \cite{koike2009optimized} to find the boundaries of \emph{relay-mapping regions}. The complexity increases with the order of PSK signal sets used. This work provides explicit equations for the boundaries of \emph{relay-mapping} regions.
 	\item For the HePNC modulations dealt by Zhang et al, where one of the users employs BPSK signal set, the boundaries of relay-mapping consist of straight lines only. For a general case, the boundaries also include arcs of circles, for which this paper gives explicit equations. 
 	\item This work generalizes some results of \cite{muralidharan2013wireless}. For example, if same order of PSK modulation is used at both users, the \emph{Internal} Clustering Independent (CI) Region is obtained from complex inversion of \emph{External} CI Region. This paper provides equations to obtain boundaries of Internal CI regions for heterogeneous PSK modulations. 
 \end{itemize}
 The paper is organized as follows: Section II describes the system model, which is followed by the derivation of the exact number and location of SFS in Section III. The Latin rectangles used for QPSK-BPSK and 8PSK-BPSK schemes, which serve as many-to-one maps at the relay are listed in Section IV. The analytical framework for partitioning of the fade state plane is given in Section V. Thereafter, the performance is evaluated in terms of Bit and Relay Error Rates in AWGN and Rayleigh fading channels. The results match those provided by Zhang et al. \cite{zhang2016design}. 
\section{System Model}
The HePNC system comprises of two users A and B, wanting to exchange data through a relay R. The notation from \cite{muralidharan2013wireless} and \cite{namboodiri2013physical} has been used and extended. Let A and B use PSK modulations $\mathcal{S}_i$ of cardinality $M_i = 2^{\lambda_i}$,  where, $\lambda_i \in \mathbb{Z}^+$ for $i=1,2$. Subscripts $i=1, 2$ is used for A, B interchangeably, to indicate parameters of A and B respectively. That is, A wants to send a $\lambda_1$ binary-bit tuple to B, and B wants to send a $\lambda_2$ binary-bit tuple to A. Without loss of generality, $\lambda_1 \geq \lambda_2$. Let the functions mapping the bit-tuples to complex symbols be binary to decimal mappings given by $\mu_i: \mathbb{F}_2^{\lambda_i} \rightarrow \mathcal{S}_i$ for $i=1, 2$. This paper considers users using symmetric $M_i$-PSK constellations of the form $\mathcal{S}_i = \left\lbrace e^{\frac{j(2k+1)\pi}{M_i}}; 0 \leq k \leq M_i-1 \right\rbrace$. The relaying has two phases, MA phase and BC phase. 
\begin{figure}
\centering
\begin{tikzpicture}[scale=1.5]
code
\node[circle, minimum size=1cm, fill=gray!30] (a) at (0, 3) {\Large{A}};
\node[circle, minimum size=1cm, fill=gray!30] (b) at +(1.5,3) {\Large{R}};
\node[circle, minimum size=1cm, fill=gray!30] (c) at +(3,3) {\Large{B}};
\draw [->, thick] (a) -- (b) node[pos=.5,sloped,above] {$H_{A}$};
\draw [->, thick] (c) -- (b) node[pos=.5,sloped,above] {$H_{B}$};
\node at (0,2.5) {$x_A \in \mathcal{S}_1$};
\node at (3,2.5) {$x_B \in \mathcal{S}_2$};
\node at (1.5,2) {\small{(a) Multiple Access Phase  [1 Channel Use]}};

\node[circle, minimum size=1cm, fill=gray!30] (a) at (0, 1) {\Large{A}};
\node[circle, minimum size=1cm, fill=gray!30] (b) at +(1.5,1) {\Large{R}};
\node[circle, minimum size=1cm, fill=gray!30] (c) at +(3,1) {\Large{B}};
\draw [->, thick] (b) -- (a) node[pos=.5,sloped,above] {$H'_{A,j}$};
\draw [->, thick] (b) -- (c) node[pos=.5,sloped,above] {$H'_{B,j}$};
\node at (1.5,0.5) {$x_{R,j} \in \mathcal{S}_2$, $j \in {1,2,...,N_{t}}$};
\node at (1.5,0) {\small{(b) Broadcast Phase  [$N_{t}=\lceil \frac{log_2|\mathcal{S}_R|}{\lambda_2} \rceil$ Channel Uses]}};
\end{tikzpicture}
\caption{Two-way Relaying with Heterogeneous PLNC \cite{zhang2016design}}
\end{figure}
\subsubsection{MA Phase}
In this phase, the users transmit complex symbols to the relay simultaneously. The user transmissions, $x_A = \mu_1(s_A)$ and $x_B = \mu_2(s_B)$, where, $s_A \in \mathbb{F}_2^{\lambda_1}$ and $s_B \in \mathbb{F}_2^{\lambda_2}$. The received signal at R is given by 
\begin{equation}
Y_R = H_Ax_A + H_Bx_B + Z_R,  
\end{equation}
where $H_A$ and $H_B$ are fade coefficients of the A-R and B-R links respectively. The additive noise $Z_R$ is $\mathcal{C}\mathcal{N}(0,\sigma^2)$, where $ \mathcal{CN}(0,\sigma^2)$ denotes circularly symmetric complex Gaussian random variable with zero mean and variance $\sigma^2$. Slow fading is assumed. The pair $(\gamma,\theta)$ is defined as a fade state.where
\begin{equation}
\gamma e^{j\theta} \triangleq \frac{H_B}{H_A}; \gamma \in \mathbb{R}^+, -\pi \leq \theta \leq \pi.
\end{equation}
It is assumed that the fade states are distributed according to a continuous probability distribution. The channel state information is assumed to be available only at the receivers. Thus, no carrier phase synchronization is needed at the users. However, symbol-level timing synchronization is assumed \cite{zhang2016design}. The effective constellation seen at the relay during the MA phase is therefore, 
\begin{equation}
\mathcal{S}_R(\gamma,\theta) = \left\lbrace x_A + \gamma e^{j\theta} x_B| x_A \in \mathcal{S}_1, x_B \in \mathcal{S}_2 \right\rbrace.
\end{equation}
The minimum distance between any two points in the constellation $\mathcal{S}_R(\gamma,\theta)$ is defined as,
\begin{equation}
d_{min}(\gamma,\theta) = \underset{\underset{(x_A,x_B)\neq(x_A',x_B')}{(x_A,x_B),(x_A',x_B') \in \mathcal{S}_1 \times \mathcal{S}_2}}{min} |(x_A-x_A') + \gamma e^{j\theta}(x_B-x_B')|.
\end{equation}
A fade state where $d_{min}(\gamma,\theta) = 0$ is called a \emph{Singular Fade State (SFS)}. The set of all SFS, is $\mathcal{H} = \left\lbrace\gamma e^{j\theta} \in \mathbb{C}|d_{min}(\gamma,\theta) = 0\right\rbrace$. The relay R jointly decodes the transmitted complex symbol pair $(x_A,x_B)$ by computing a maximum likelihood estimate:
\begin{equation}
(\hat{x}_A,\hat{x}_B) = \underset{(x_A',x_B') \in \mathcal{S}_1 \times \mathcal{S}_2}{arg min}  |Y_R - H_Ax_A' - H_Bx_B'|.
\end{equation}
\subsubsection{BC Phase}
Instead of re-transmitting the estimated pair in the form of a constellation of cardinality $M_1 \times M_2$, the relay applies a many-to-one map from this pair to points on another constellation having smaller cardinality. This many-to-one map depends on the fade state and is defined by $\mathcal{M}^{\gamma,\theta}: \mathcal{S}_1 \times \mathcal{S}_2 \rightarrow \mathcal{S}'(\gamma,\theta)$. The signal set $\mathcal{S}'$ satisfies the inequality, $max(|\mathcal{S}_1|,|\mathcal{S}_2|) \leq |\mathcal{S}'(\gamma,\theta)| \leq |\mathcal{S}_1| \times |\mathcal{S}_2|$. Elements of $\mathcal{S}_1 \times \mathcal{S}_2$ mapped to the same complex number in $\mathcal{S}'$  by the map $\mathcal{M}^{\gamma,\theta}$ are said to form a cluster. Let $\left\lbrace\mathcal{L}_1, \mathcal{L}_2, ..., \mathcal{L}_l\right\rbrace$ denote the set of all clusters for a given fade state, also called a Clustering. Let $\mathcal{C}$ denote a generic clustering. Let $\mathcal{L}_{\mathcal{C}}(x_A,x_B)$ denote the cluster to which $(x_A,x_B)$ belongs under the clustering $\mathcal{C}$. 

The relay needs to broadcast a complex symbol $X_R = \mathcal{M}^{\gamma,\theta}(\hat{x}_A,\hat{x}_B) \in \mathcal{S}'(\gamma,\theta)$. The cardinality of the transmitted constellation is assumed to be limited by the SNR of the weakest link (by convention B-R). Thus, the relay transmits a set of symbols $X_{R_j} \in \mathcal{S}_2$ with the number of broadcast transmissions being $N_t = \lceil \frac{log_2(|\mathcal{S}'(\gamma,\theta)|)}{|\lambda_2|} \rceil$ and $\bar{X}_R = \left\lbrace X_{R_1}, X_{R_2}, ..., X_{R_{N_t}} \right\rbrace$. The users A and B receive transmissions $Y_{A_j} = H_{A_j}'x_{R_j} + Z_{A_j}$ and $Y_{B_j} = H_{B_j}'x_{R_j} + Z_{B_j}$, where, $j \in \{1,..., N_t\}$. The fading coefficients corresponding to the R-A and R-B links are denoted by $H_{A_j}'$ and $H_{B_j}'$ and the additive noises $Z_{A_j}$ and $Z_{B_j}$ are $\mathcal{C}\mathcal{N}(0,\sigma^2)$. The users then decode the individual transmissions and create composite symbols $Y_A$ and $Y_B$ to estimate $x_A$ and $x_B$ respectively. The map $\mathcal{M}^{\gamma,\theta}$ is known to the users, and so is the symbol transmitted by them, using which the data of the other user has to be recovered. 
To ensure this, the many-to-one map should satisfy the condition called Exclusive Law \cite{koike2009optimized,muralidharan2013wireless}, which is 
\begin{align}
\mathcal{M}^{\gamma,\theta}&(x_A,x_B) \neq \mathcal{M}^{\gamma,\theta}(x_A',x_B), \nonumber \\
&~for~ x_A \neq x_A'; x_A, x_A' \in \mathcal{S}_1, \forall x_B \in \mathcal{S}_2. \nonumber \\ 
\mathcal{M}^{\gamma,\theta}&(x_A,x_B) \neq \mathcal{M}^{\gamma,\theta}(x_A,x_B'), \nonumber\\
&~for~ x_B \neq x_B'; x_B, x_B' \in \mathcal{S}_2, \forall x_A \in \mathcal{S}_1.
\end{align}
This constraint leads to the mapping function being of the form of a \emph{Latin Rectangle}. If the fade state is an SFS, the relay cannot decide upon the transmitted pair $(x_A,x_B)$, as multiple pairs lead to the same received symbol at the relay. For fade state values $(\gamma,\theta)$ near the neighbourhood of an SFS, the value of $d_{min}(\gamma,\theta)$ is greatly reduced, which might lead the relay mapping the estimated transmitted symbols to a wrong constellation point in the MA phase. To mitigate this harmful effect of an SFS, another constraint called the Singularity Removal Constraint is imposed: for all pairs $(x_A,x_B)$ and $(x_A',x_B')$, where, $x_A \neq x_A'$ and $x_B \neq x_B'$, such that $|(x_A-x_A') + \gamma e^{j\theta}(x_B-x_B')| = 0$, ensure $\mathcal{M}^{\gamma,\theta}(x_A,x_B) = \mathcal{M}^{\gamma,\theta}(x_A',x_B')$. 
The above mappings remove the detrimental effect of \emph{distance shortening} \cite{koike2009optimized}. 
The minimum clustering distance for a given mapping/clustering $\mathcal{C}$ and a given fade state ($\gamma,\theta$) is defined as
\begin{align}
& d_{min}(\mathcal{C};\gamma,\theta) = \nonumber\\ & \underset{ \underset{\mathcal{M}^{\gamma,\theta}(x_A,x_B) \neq \mathcal{M}^{\gamma,\theta}(x_A',x_B')}{(x_A,x_B),(x_A',x_B') \in \mathcal{S}_1 \times \mathcal{S}_2}}{min} |(x_A-x_A') + \gamma e^{j\theta}(x_B-x_B')|.
\end{align}
A mapping is said to \emph{remove} an SFS $h \in \mathcal{H}$, if the minimum clustering distance $d_{min}(\mathcal{C};h) > 0$. 

\section{Singular Fade States}
In this section, the location and number of SFS are obtained for $M_1$-PSK--$M_2$-PSK PLNC. The points in the $M_i$-PSK signal set are assumed to be of the form  $\mathcal{S}_i = \left\lbrace e^{j(2k+1)\pi/M_i},0 \leq k \leq M_i-1 \right\rbrace$, where, $M_i = 2^{\lambda_i}$ and $\lambda_i \in \mathbb{Z}^+$ for $i$={1,2}. The \emph{difference constellation} $\Delta\mathcal{S}_i$, of a signal set $\mathcal{S}_i$, is given by
\begin{equation}
\Delta \mathcal{S}_i = \left\lbrace s_l - s_m | s_l,s_m \in \mathcal{S}_i \right\rbrace.
\end{equation} 
For a symmetric $M_i$-PSK signal set, we have,
\begin{align}
s_l - s_m &= e^{j(2l+1)\pi/M_i} - e^{j(2m+1)\pi/M_i} \nonumber \\
&= \left\lbrace cos\left(\frac{(2l+1)\pi}{M_i}\right) - cos\left(\frac{(2m+1)\pi}{M_i}\right) \right\rbrace \nonumber \\ &+ j\left\lbrace sin\left(\frac{(2l+1)\pi}{M_i}\right) - sin\left(\frac{(2m+1)\pi}{M_i}\right) \right\rbrace \nonumber \\
&= 2sin\left(\frac{(l-m)\pi}{M_i}\right) e^{j\left(\frac{(l+m+1)\pi}{M_i}\right)}.
\end{align}
Let $l-m = n$. It is sufficient to consider $n$ in the range $1 \leq n \leq M_i/2$ to get all the members of $\Delta\mathcal{S}_i$ . Let $l+m+1 = 2k$, if $n$ is odd, and $l+m+1 = 2k+1$, if $n$ is even, and  $0 \leq k \leq M_i-1$. Thus, we have
\begin{align}
\Delta\mathcal{S}_i = \left\lbrace0\right\rbrace &\bigcup \left\lbrace \underset{\underset{0 \leq k \leq M-1}{ 1 \leq n \leq M_i/2; n~odd}}{\bigcup} 2sin\left(\frac{n\pi}{M_i}\right) e^{j\left(\frac{2k\pi}{M_i}\right)} \right\rbrace\nonumber \\ &\bigcup \left\lbrace \underset{\underset{0 \leq k \leq M_i-1}{ 1 \leq n \leq M_i/2; n~even}}{\bigcup} 2sin\left(\frac{n\pi}{M_i}\right) e^{j\left(\frac{(2k+1)\pi}{M_i}\right)} \right\rbrace.\nonumber
\end{align}  
Thus, $x_{k,n,i} \in \Delta\mathcal{S}_i$, is given by,
\begin{equation}
x_{k,n,i} \triangleq \left\{ \,
\begin{IEEEeqnarraybox}[][c]{l?s}
\IEEEstrut
2sin\left(\frac{n\pi}{M_i}\right) e^{j\left(\frac{(2k+1)\pi}{M_i}\right)} & if $n~ even$, \\
2sin\left(\frac{n\pi}{M_i}\right) e^{j\left(\frac{2k\pi}{M_i}\right)} & if $n~ odd$,
\IEEEstrut
\end{IEEEeqnarraybox}
\right.
\label{diff_const_eqn}
\end{equation}
where $1 \leq n \leq M_i/2$, $0 \leq k \leq M_i-1$. Thus, SFS are of the form $\gamma_se^{j\theta_s} = \frac{H_B}{H_A} = -\frac{x_{k_1,n_1,1}}{x_{k_2,n_2,2}}$, for some $x_{k_i,n_i,i} \in \Delta\mathcal{S}_{i}$. Let $\delta \triangleq log_2(M_1/M_2)$.
\begin{lemma}[Muralidharan et al. \cite{muralidharan2013wireless}]
For integers $k_1,k_2,l_1$ and $l_2$, where $1 \leq k_1,k_2,l_1,l_2 \leq M/2$, $k_1 \neq k_2$ and $l_1 \neq l_2$,
\begin{equation}
\frac{sin\left(\frac{k_1\pi}{M}\right)}{sin\left(\frac{k_2\pi}{M}\right)} = \frac{sin\left(\frac{l_1\pi}{M}\right)}{sin\left(\frac{l_2\pi}{M}\right)},
\end{equation}
if and only if $k_1 = l_1$ and $k_2 = l_2$.
\end{lemma}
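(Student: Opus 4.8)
Only the ``only if'' direction carries content, since substituting $k_1=l_1,\,k_2=l_2$ into the ratio gives equality at once. The plan is to clear denominators and linearise. Writing $\phi=\pi/M$, the hypothesis is equivalent to $\sin(k_1\phi)\sin(l_2\phi)=\sin(k_2\phi)\sin(l_1\phi)$, and applying $2\sin X\sin Y=\cos(X-Y)-\cos(X+Y)$ to each side converts it into the single linear relation
\[
\cos(a\phi)+\cos(d\phi)=\cos(b\phi)+\cos(c\phi),
\]
with $a=|k_1-l_2|$, $b=k_1+l_2$, $c=|k_2-l_1|$, $d=k_2+l_1$. All four arguments lie in $[0,M]$, since $a,c\le M/2-1$ and $2\le b,d\le M$. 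It then suffices to show that this relation forces $a=c$ and $b=d$: the unordered pairs $\{k_1,l_2\}$ and $\{k_2,l_1\}$ are each recovered from their sum and absolute difference, so $a=c,\,b=d$ gives $\{k_1,l_2\}=\{k_2,l_1\}$, and since $k_1\neq k_2$ is assumed the only surviving case is $k_1=l_1,\,k_2=l_2$.

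The engine of the argument is a linear-independence fact over $\mathbb{Q}$. Because $M=2^{\lambda}$, the number $\omega=e^{j\phi}$ is a primitive $2M$-th root of unity with minimal polynomial $\Phi_{2M}(x)=x^{M}+1$; hence $\omega^{M}=-1$ and the real cyclotomic field $\mathbb{Q}(\cos\phi)$ has degree $M/2$. Writing $\cos(r\phi)=T_r(\cos\phi)$ with $T_r$ the Chebyshev polynomial of degree $r$, the $M/2$ numbers $\{\cos(r\phi):0\le r\le M/2-1\}$ are linearly independent over $\mathbb{Q}$, for a nontrivial rational relation among them would be a nonzero polynomial of degree below $M/2$ vanishing at $\cos\phi$. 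I would then fold every argument above into the fundamental range $[0,M/2]$ by means of $\cos((M-r)\phi)=-\cos(r\phi)$ and $\cos((M/2)\phi)=0$, turning the displayed relation into an integer linear combination of these basis elements that equals zero. Independence forces each coefficient to vanish, so the folded arguments must cancel in signed pairs.

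The cancellation is where the work lies, and its casework is the part I expect to be delicate. In the unfolded regime $b,d\le M/2$ the relation reads $\cos(a\phi)+\cos(d\phi)-\cos(b\phi)-\cos(c\phi)=0$, so the multisets $\{a,d\}$ and $\{b,c\}$ coincide; the pairing $a=b$ is impossible since $|k_1-l_2|<k_1+l_2$ for positive integers, leaving exactly $a=c,\ b=d$. For the remaining regimes, each of $b,d$ exceeding $M/2$ flips a sign through the folding rule, producing a handful of sign configurations. I would dispose of each by showing it is either arithmetically impossible --- the positive and negative terms cannot balance once coincidences are accounted for --- or that, after undoing the reflection $r\mapsto M-r$, it collapses back to $a=c,\ b=d$ or to the excluded equality $k_1=k_2$.

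The single point needing genuine care is the ``reflected'' branch, in which folding yields $a=M-d$ and $c=M-b$ rather than $a=c,\ b=d$. Here I would verify directly that the two reflection equations $|k_1-l_2|+k_2+l_1=M$ and $k_1+l_2+|k_2-l_1|=M$, under the constraint that every index lies in $[1,M/2]$, force the indices either to coincide in the forbidden way $k_1=k_2$ or to satisfy $k_1=l_1,\ k_2=l_2$ outright, so that no new solution is introduced. The degenerate cases where an argument equals $0$ or $M/2$ --- that is, $k_1=l_2$, or $k_1=l_2=M/2$, and their analogues --- are isolated and checked separately; each is immediate. Assembling these, the relation admits only $a=c,\ b=d$, and the identification above completes the proof.
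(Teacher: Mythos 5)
The paper itself contains no proof of this lemma: it is stated with an attribution to Muralidharan et al.\ \cite{muralidharan2013wireless} and imported as a known result (it is then used in the proof of Lemma 2), so there is no in-paper argument to compare against, and your proposal must be judged on its own merits. On those merits it is sound. The engine is correct and correctly placed: since $M=2^{\lambda}$, one has $\Phi_{2M}(x)=x^{M}+1$, so $[\mathbb{Q}(\cos(\pi/M)):\mathbb{Q}]=M/2$, and the triangular Chebyshev change of basis makes $\{\cos(r\pi/M):0\le r\le M/2-1\}$ linearly independent over $\mathbb{Q}$. It is worth noting explicitly that this is exactly where the power-of-two hypothesis (implicit throughout the paper, where $M_i=2^{\lambda_i}$) enters; for composite $M$ with odd factors such sine-ratio coincidences can occur, so your proof localizes the arithmetic input correctly. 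The folding rules $\cos((M-r)\phi)=-\cos(r\phi)$, $\cos((M/2)\phi)=0$, the ranges $a,c\le M/2-1$, $2\le b,d\le M$, and the unfolded-regime multiset identification (with $a=b$ ruled out by $|k_1-l_2|<k_1+l_2$) are all right.

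The reflected branch you single out as the delicate point does close exactly as you predict, and the verification is a one-line subtraction in each sign configuration of the two equations $|k_1-l_2|+k_2+l_1=M$ and $k_1+l_2+|k_2-l_1|=M$: when $k_1\ge l_2$ and $k_2\ge l_1$, subtracting gives $l_1=l_2$ (excluded by hypothesis); when $k_1<l_2$ and $k_2<l_1$, it gives $k_1=k_2$ (excluded); and in the two mixed configurations it forces $k_2=l_2$ together with $k_1+l_1=M$, hence $k_1=l_1=M/2$ (respectively $k_1=l_1$ and $k_2=l_2=M/2$), which is the desired conclusion rather than a new solution. One small correction to your summary: the excluded outcomes include $l_1=l_2$ as well as $k_1=k_2$; both are forbidden by hypothesis, so nothing breaks, but your text mentions only the latter. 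The degenerate arguments behave as you claim ($r=0$ is simply the basis element $T_0$; $r=M/2$ annihilates its term, after which the coefficient count either yields a contradiction or, in the case $b=d=M/2$, returns $\{k_1,l_2\}=\{k_2,l_1\}$ and hence $k_1=l_1$, $k_2=l_2$ again). As a stylistic remark only: for $2M$ a power of two, every vanishing sum of $2M$-th roots of unity decomposes into antipodal pairs, so expanding the four sines into exponentials gives an eight-term vanishing sum whose pairing reproduces your casework somewhat faster and with less machinery than the cyclotomic-degree argument; but this is an alternative, not a gap.
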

\begin{lemma}
The singular fade states lie on $\frac{M_1M_2}{4} - \frac{M_2}{2} + 1$ circles with $M_1$ points on each circle $(M_1 \geq M_2)$ with radii of circles given by $sin\left(n_1\pi/M\right)/sin\left(n_2\pi/M\right)$, where $1 \leq n_i \leq M_i/2$. The phase angles of the $M_1$ points are given by $2l\pi /M_1 + \phi$, where $0 \leq l \leq M_1-1$ and $\phi$ is,
\begin{equation}
\phi = \begin{cases}
	0 & \text{if $n_1$ is odd and $n_2$ is odd},\\
	\frac{(1-2^{\delta})\pi}{M_1} & \text{if $n_1$ is even and $n_2$ is even},\\
	\frac{-2^{\delta}\pi}{M_1} & \text{if $n_1$ is odd and $n_2$ is even},\\
	\frac{\pi}{M_1} & \text{if $n_1$ is even and $n_2$ is odd}.
	\end{cases}
\end{equation}
\end{lemma}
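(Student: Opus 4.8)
The plan is to enumerate $\mathcal{H}$ directly from the closed form $\gamma_s e^{j\theta_s}=-x_{k_1,n_1,1}/x_{k_2,n_2,2}$ and split it into modulus (which fixes a circle) and argument (which places the points on that circle). First I would take the modulus: by (\ref{diff_const_eqn}) the magnitude of each difference is $2\sin(n_i\pi/M_i)$ independent of $k_i$, so $\gamma_s=\sin(n_1\pi/M_1)/\sin(n_2\pi/M_2)$ with $1\le n_i\le M_i/2$. Every SFS therefore lies on a circle of one of these radii, and counting circles reduces to counting distinct values of this ratio.

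To count the distinct radii I would put the ratio over a common denominator $M=M_1$. Since $M_1=2^{\delta}M_2$, we have $\sin(n_2\pi/M_2)=\sin(2^{\delta}n_2\pi/M_1)$, so each radius equals $\sin(n_1\pi/M)/\sin(2^{\delta}n_2\pi/M)$ with $n_1,\,2^{\delta}n_2\in\{1,\dots,M/2\}$. Lemma~1 then applies verbatim: provided numerator and denominator indices differ, distinct index pairs give distinct radii, so distinct pairs $(n_1,n_2)$ with $n_1\ne 2^{\delta}n_2$ yield distinct circles. The only coincidences occur when $n_1=2^{\delta}n_2$, where the ratio collapses to $1$; there are $M_2/2$ such pairs, all producing the single unit circle. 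Removing these $M_2/2$ pairs from the $M_1M_2/4$ total and adding back the one unit circle gives $M_1M_2/4-M_2/2+1$ circles.

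Next I would compute the argument. Writing $\arg(-x_{k_1,n_1,1}/x_{k_2,n_2,2})=\pi+\alpha_1-\alpha_2$, where $\alpha_i$ is read off from (\ref{diff_const_eqn}) --- namely $2k_i\pi/M_i$ when $n_i$ is odd and $(2k_i+1)\pi/M_i$ when $n_i$ is even --- and again replacing $1/M_2$ by $2^{\delta}/M_1$, every phase becomes an integer multiple of $\pi/M_1$. In each of the four parity combinations this integer splits as an even part $M_1+2k_1-2^{\delta+1}k_2$, which produces the term $2l\pi/M_1$ with $l=M_1/2+k_1-2^{\delta}k_2$, and a residual constant, which produces exactly the offset $\phi$ listed in the four cases. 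Since $k_1$ runs over a complete residue system modulo $M_1$, so does $l$, and hence each circle carries precisely $M_1$ equally spaced points.

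The main obstacle is the unit circle, the one place where the hypothesis of Lemma~1 (distinct numerator and denominator indices) fails and where several parity combinations contribute at once: for $\delta\ge 1$, $n_1=2^{\delta}n_2$ forces $n_1$ even while $n_2$ may be odd or even, so both the ``even--even'' and ``even--odd'' cases occur on the same radius. I would resolve this by checking that the offsets of these competing cases differ by $-2^{\delta}\pi/M_1$, which is the integer multiple $-2^{\delta-1}$ of the point spacing $2\pi/M_1$; the competing point sets therefore coincide and the unit circle too carries exactly $M_1$ points. This consistency check is what makes the ``$M_1$ points per circle'' statement uniform across all $M_1M_2/4-M_2/2+1$ circles.
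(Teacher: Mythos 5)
Your proposal is correct and follows essentially the same route as the paper's proof: the radius count via Lemma~1 after rewriting $\sin(n_2\pi/M_2)=\sin(2^{\delta}n_2\pi/M_1)$, subtracting the $M_2/2$ degenerate pairs with $n_1=2^{\delta}n_2$ and adding back the unit circle, and the phase offsets via the four parity cases with $l=k_1-2^{\delta}k_2$ running over a complete residue system modulo $M_1$. Your explicit absorption of the $\pi$ from the minus sign into $l$ and your closing consistency check that the even--even and even--odd offsets on the unit circle differ by the multiple $2^{\delta-1}\cdot(2\pi/M_1)$ of the point spacing are small refinements the paper leaves implicit, but they do not change the argument.
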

\begin{proof}
From (\ref{diff_const_eqn}), the amplitudes of SFS $\gamma_s=  sin\left(n_1\pi/M_1\right)/sin\left(n_2\pi/M_2\right)$ for some $n_i \in \left[M_i/2\right]$. 

We need to count the number of distinct values of $\gamma_s$. From Lemma 1 and using $M_1 = 2^{\delta}M_2$,
\begin{equation}
\frac{sin\left(\frac{l_1\pi}{M_1}\right)}{sin\left(\frac{2^{\delta}m_1\pi}{M_1}\right)} = \frac{sin\left(\frac{l_2\pi}{M_1}\right)}{sin\left(\frac{2^{\delta}m_2\pi}{M_1}\right)},
\end{equation}
where $1 \leq l_1,l_2,2^{\delta}m_1,2^{\delta}m_2 \leq M_1/2$, $l_1 \neq 2^{\delta}m_1$ and $l_2 \neq 2^{\delta}m_2$, if and only if $l_1 = l_2$ and $2^{\delta}m_1 = 2^{\delta}m_2 \Rightarrow m_1 = m_2$. 
Thus, out of the $M_1M_2/4$ pairs of $l_1$ and $m_1$, we subtract cases for which $l_1 = 2^{\delta}m_1$ (since they all lead to same $\gamma_s$ and add one on behalf of all of them. Hence, the number of distinct amplitudes of singular fade states is $\frac{M_1M_2}{4} - \frac{M_2}{2} + 1$.

From (\ref{diff_const_eqn}), the phase of singular fade states on the circles of different radii depend on the values of $n_i$ . If $n_1$ is odd and $n_2$ is also odd, the phase $\phi = 2k_1\pi/M_1 - 2k_22^{\delta}\pi/M_1$, where $0 \leq k_i \leq M_i -1$, $i = 1,2$. Taking $k_1-2^{\delta}k_2 = l$, it is clear that $l$ has $M_1$ distinct values and hence $0 \leq l \leq M_1 -1$.  Thus, in this case the phase of points is $\phi = 2\pi l/M_1$, which shows that there are $M_1$ equispaced SFS on each circle. 

The other cases follow similarly. If $n_1$ is odd and $n_2$ is even, the phase $\phi = 2k_1\pi/M_1 - 2k_22^{\delta}\pi/M_1 - 2^{\delta}\pi/M_1 = 2l \pi /M_1 - 2^{\delta}\pi/M_1$. If $n_1$ is even and $n_2$ is odd, the phase $\phi = 2k_1\pi/M_1 + \pi/M_1 - 2k_22^{\delta}\pi/M_1 = 2\pi l/M_1 + \pi/M_1$.Finally, if both $n_1$ and $n_2$ are even, $\phi = 2k_1\pi/M_1 + \pi/M_1 - 2k_22^{\delta}\pi/M_1 - 2^{\delta}\pi/M_1 = 2\pi l/M_1 + (1-2^{\delta})\pi/M_1$.
\end{proof}
When $\delta = 0$, we get the result of Lemma 2 of \cite{muralidharan2013wireless} as a special case. Hence, Lemma 2 is the generalized version of the corresponding result given in \cite{muralidharan2013wireless}.
\begin{example}
Let users A and B use QPSK and BPSK signal sets respectively. Thus, $x_A \in \left\lbrace \pm \frac{1}{\sqrt{2}} \pm \frac{j}{\sqrt{2}}\right\rbrace$ and $x_B \in \left\lbrace \pm j\right\rbrace$, $M_1=4$ and $M_2=2$. $\Delta\mathcal{S}_1 = \frac{1}{\sqrt{2}} \{0,2,-2,2j,-2j,-2+2j,-2-2j,2-2j,2+2j\}$ and $\Delta\mathcal{S}_2 = \{-2j,0,2j\}$. From the definition of SFS, we get,
\begin{equation*}
\mathcal{H}=\left\lbrace0,\frac{1}{\sqrt{2}},\frac{-1}{\sqrt{2}},\frac{j}{\sqrt{2}},\frac{-j}{\sqrt{2}},\frac{1+j}{\sqrt{2}},\frac{1-j}{\sqrt{2}},\frac{-1+j}{\sqrt{2}},\frac{-1-j}{\sqrt{2}} \right\rbrace
\end{equation*}
If we use Lemma 2, we see that the SFS are distributed in $4\times2/4 - 2/2 +1 = 2$ circles. The radii of the circles can be computed by taking cases of $n_1=1,~n_2=1$ and $n_1=2,~n_2=1$, so that $\gamma_s = \{1/\sqrt{2},1\}$. There are $M_1 = 4$ points on each circle with phases, as given in Lemma 2.  
\end{example}
The SFS for QPSK-BPSK, 8PSK-BPSK, and 8PSK-QPSK are shown in Fig. \ref{sfs:sub1}, \ref{sfs:sub2}, and \ref{sfs:sub3} respectively.
\begin{figure*}[!htbp]
\centering
\begin{subfigure}{.33\textwidth}
  \centering
  \includegraphics[width=15pc]{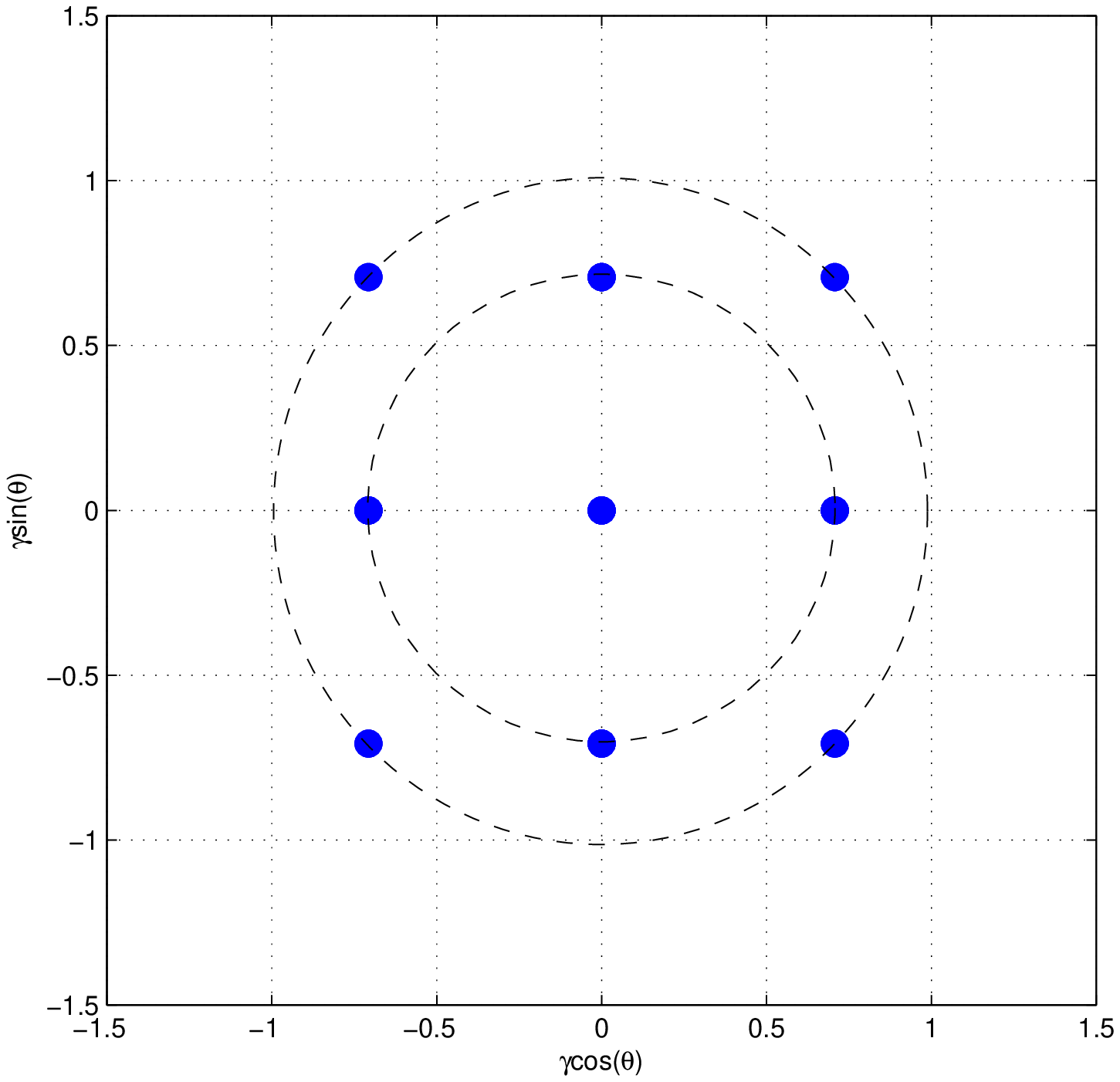}
  \caption{QPSK-BPSK}
  \label{sfs:sub1}
\end{subfigure}%
\begin{subfigure}{.33\textwidth}
  \centering
  \includegraphics[width=15pc]{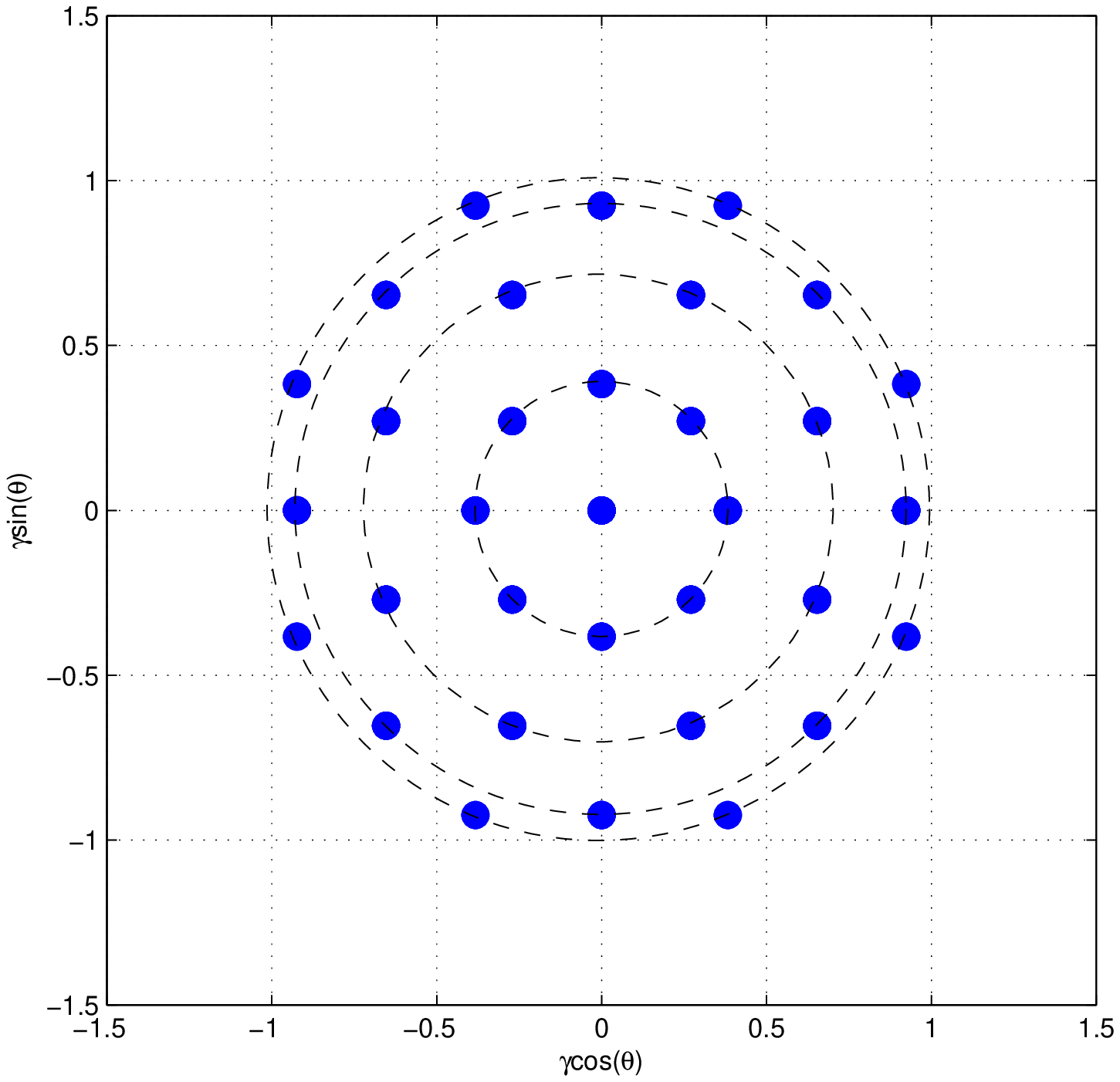}
  \caption{8PSK-BPSK}
  \label{sfs:sub2}
\end{subfigure}
\begin{subfigure}{.33\textwidth}
  \centering
  \includegraphics[width=15pc]{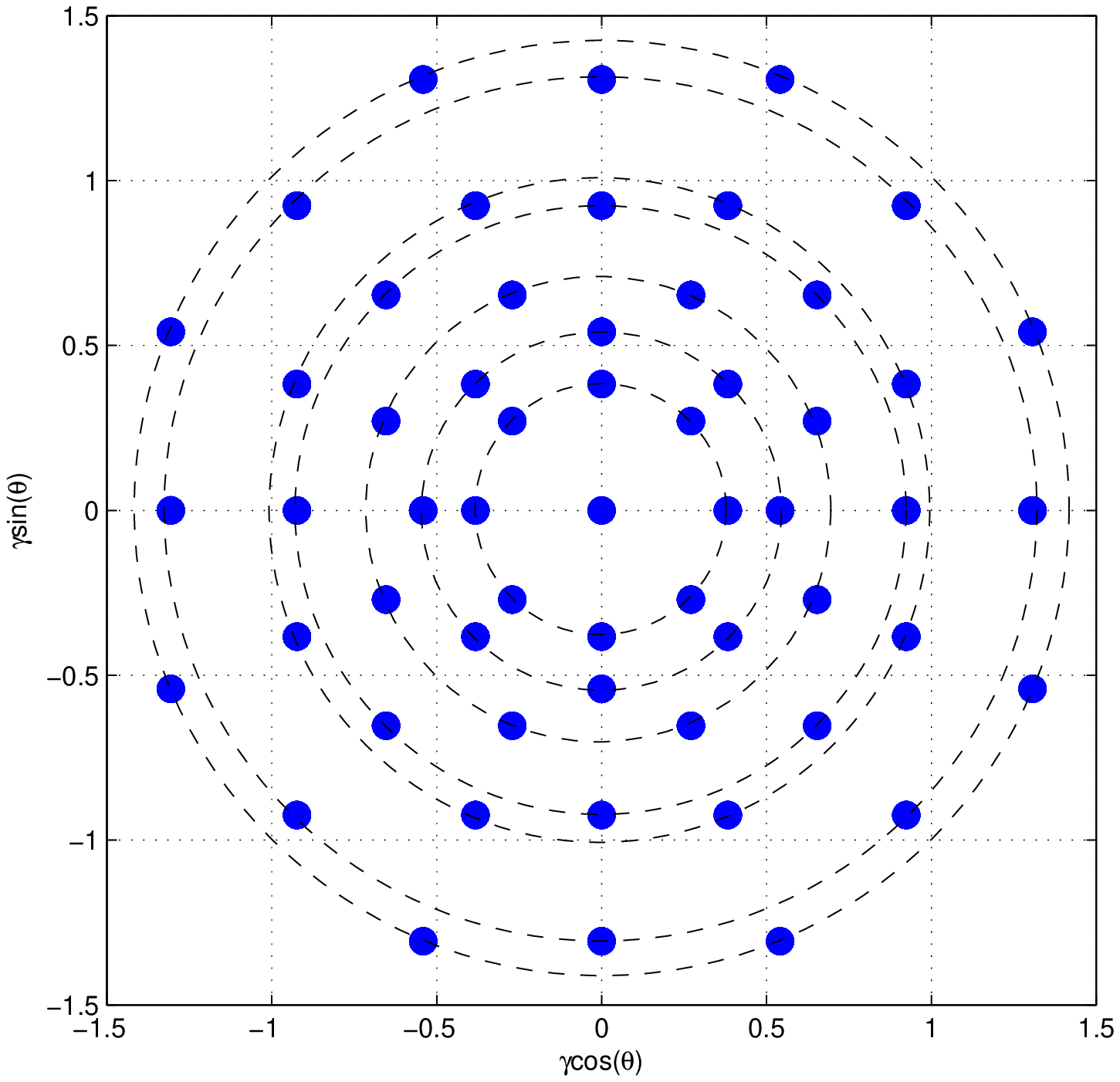}
  \caption{8PSK-QPSK}
  \label{sfs:sub3}
\end{subfigure}
\centering
\caption{Singular Fade States for Different Heterogeneous PSK Modulations}
\label{sfs:test}
\end{figure*}

\section{Construction of Latin Rectangles}
\begin{definition}
A Latin Rectangle of order $M \times N$ on the symbols from the set $\mathbb{Z}_{q} = \left\lbrace 0,1,...,q-1 \right\rbrace$ is an $M \times N$ array, where each cell contains one symbol and each symbol occurs at most once in each row and column.
\end{definition}

From Example 1, there are nine SFS in QPSK-BPSK PLNC. The effect of SFS at $\gamma = 0$ cannot be eliminated. This is because, no matter what constraint is imposed on the mapping, the small value of $\gamma$ leads to very short minimum Euclidean distance. To eliminate the rest, we list the Singularity Removal Constraints for each SFS for QPSK-BPSK PLNC in Table \ref{table_1}. Based on the constraints and requirements of the Exclusive law, we find the smallest possible set of Latin rectangles that remove all the non-zero SFS. A list of Latin rectangles is provided in Table \ref{qpsk_bpsk_ls}. A set of 8 mappings has also been obtained to remove the effect of 32 SFS in 8PSK-BPSK PLNC, which are shown in Table \ref{8psk_bpsk_ls}.
\begin{table}[!htbp]
\caption{Singularity Removal Constraints for QPSK-BPSK PLNC}
\label{table_1}
\centering
\begin{threeparttable}
\begin{tabular}{c c c }
\hline 
\multicolumn{2}{c}{Singular Fade State} & Constraint$^*$ \\
$\gamma_s$ & $\theta_s$ & \\ 
\hline 
0 & 0 & - \\ 
$\frac{1}{\sqrt{2}}$ & 0 & $\mathcal{M}(0,1) = \mathcal{M}(1,0)$; $\mathcal{M}(3,1) = \mathcal{M}(2,0)$ \\ 
$\frac{1}{\sqrt{2}}$ & $\frac{\pi}{2}$ & $\mathcal{M}(0,0) = \mathcal{M}(1,1)$; $\mathcal{M}(2,1) = \mathcal{M}(3,0)$  \\ 
$\frac{1}{\sqrt{2}}$ & $\pi$ & $\mathcal{M}(1,1) = \mathcal{M}(2,0)$; $\mathcal{M}(0,1) = \mathcal{M}(3,0)$  \\ 
$\frac{1}{\sqrt{2}}$ & $\frac{3\pi}{2}$ & $\mathcal{M}(0,0) = \mathcal{M}(3,1)$; $\mathcal{M}(1,0) = \mathcal{M}(2,1)$  \\ 
1 & $\frac{\pi}{4}$ & $\mathcal{M}(0,1) = \mathcal{M}(2,0)$ \\ 
1 & $\frac{3\pi}{4}$ & $\mathcal{M}(1,1) = \mathcal{M}(3,0)$ \\ 
1 & $\frac{5\pi}{4}$ & $\mathcal{M}(2,1) = \mathcal{M}(0,0)$ \\ 
1 & $\frac{7\pi}{4}$ & $\mathcal{M}(3,1) = \mathcal{M}(1,0)$ \\ 
\hline 
\end{tabular}
\begin{tablenotes}[para,flushleft]
  $^*\mathcal{M}^{\gamma,\theta}$ is denoted as $\mathcal{M}$ for compactness. 
\end{tablenotes}
\end{threeparttable}
\end{table} 
\begin{table}[!htbp]
\caption{Mappings for QPSK-BPSK PLNC}
\label{qpsk_bpsk_ls}
\centering
\begin{threeparttable}
\begin{tabular}{ccccccc}
  \hline 
    Map$^*$ &  & 0 & 1 & 2 & 3  & Removes SFS ($\gamma, \theta$)\\ 
  \hline 
   & 0 & 0 & 1 & 2 & 3 & • \\ 
 $\mathcal{C}_1$ & 1 & 1 & 0 & 3 & 2 & $(\frac{1}{\sqrt{2}},0)$ and $(\frac{1}{\sqrt{2}},\pi)$ \\ 
 $\mathcal{C}_2$ & 1 & 3 & 2 & 1 & 0 & $(\frac{1}{\sqrt{2}},\frac{\pi}{2})$ and $(\frac{1}{\sqrt{2}},\frac{3\pi}{2})$ \\ 
$\mathcal{C}_3$ & 1 & 2 & 3 & 0 & 1 & $(1,\frac{\pi}{4})$, $(1,\frac{3\pi}{4})$, $(1,\frac{5\pi}{4})$ and $(1, \frac{7\pi}{4})$ \\ 
  \hline   
   \end{tabular}
\begin{tablenotes}[para,flushleft]
  $^*$The first row is common for all maps. 
\end{tablenotes}
\end{threeparttable}
\end{table}  
\begin{table}[!htbp]
\renewcommand{\arraystretch}{1.3}
\caption{Mappings for 8PSK-BPSK PLNC}
\label{8psk_bpsk_ls}
\centering
\begin{threeparttable}
\begin{tabular}{cccccccccc}
\hline 
Map$^*$ & & 0 & 1 &  2 & 3 & 4 & 5 & 6 & 7 \\ 
\hline 
& 0 & 0 & 1 & 2 & 3 & 4 & 5 & 6 & 7 \\ 

$\mathcal{C}_1$ & 1 & 1 & 5 & 6 & 7 & 3 & 4 & 2 & 0 \\ 

$\mathcal{C}_2$ & 1 & 3 & 0 & 1 & 2 & 5 & 6 & 7 & 4 \\ 

$\mathcal{C}_3$ & 1 & 7 & 2 & 3 & 4 & 1 & 0 & 5 & 6 \\ 

$\mathcal{C}_4$ & 1 & 2 & 7 & 0 & 5 & 6 & 3 & 4 & 1 \\ 

$\mathcal{C}_5$ & 1 & 6 & 3 & 4 & 1 & 2 & 7 & 0 & 5 \\ 

$\mathcal{C}_6$ & 1 & 5 & 4 & 7 & 6 & 1 & 0 & 3 & 2 \\ 

$\mathcal{C}_7$ & 1 & 3 & 6 & 5 & 0 & 7 & 2 & 1 & 4 \\ 

$\mathcal{C}_8$ & 1 & 4 & 5 & 6 & 7 & 0 & 1 & 2 & 3 \\ 
\hline
\end{tabular} 
\begin{tablenotes}[para,flushleft]
  $^*$The first row is common for all maps. 
\end{tablenotes}
\end{threeparttable}
\end{table}
 
\begin{figure*}[!htbp]
\centering
\begin{subfigure}{.5\textwidth}
  \centering
  \includegraphics[width=21pc]{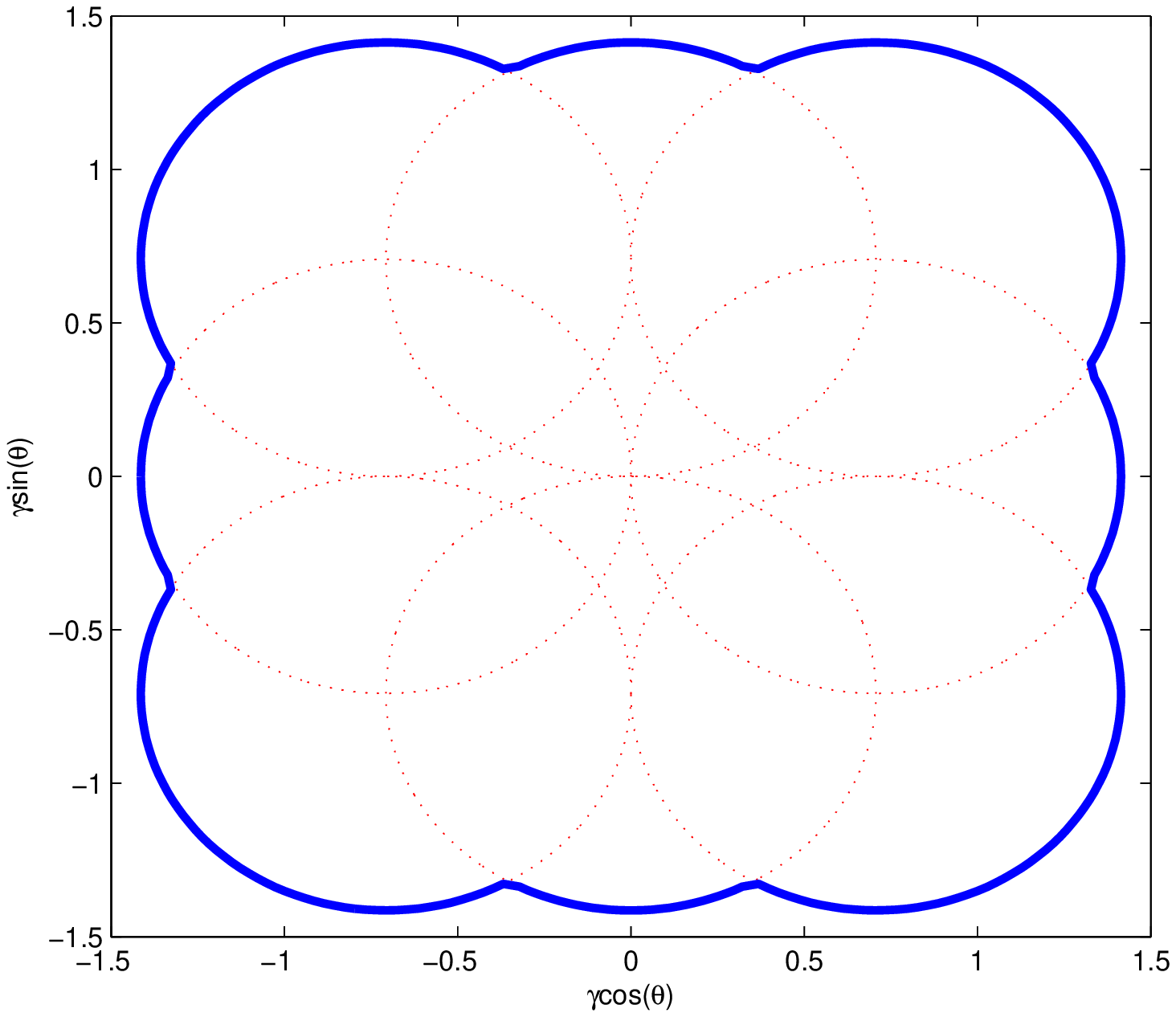}
  \caption{QPSK-BPSK}
  \label{cir:sub1}
\end{subfigure}%
\begin{subfigure}{.5\textwidth}
  \centering
  \includegraphics[width=21pc]{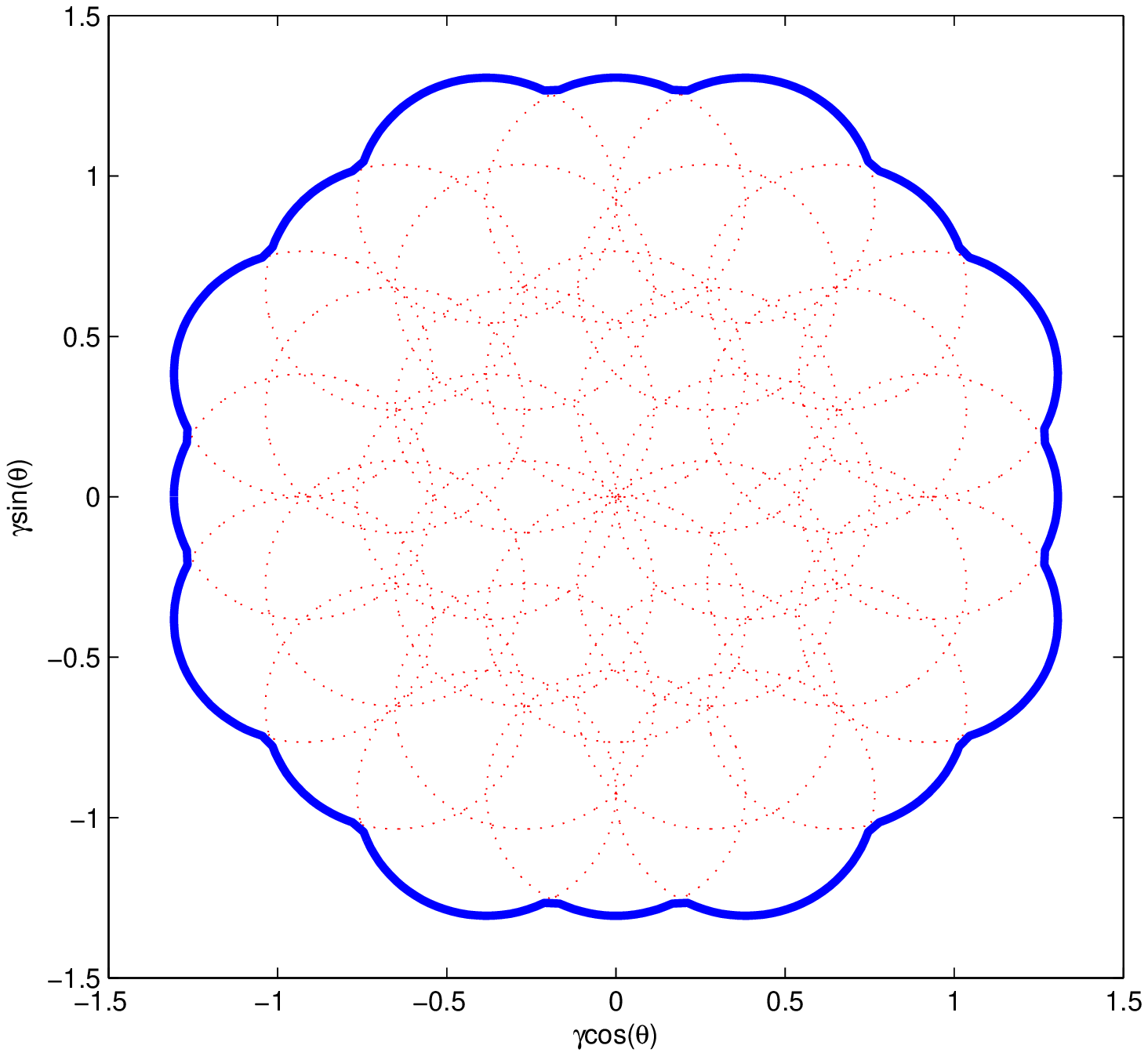}
  \caption{8PSK-BPSK}
  \label{cir:sub2}
\end{subfigure}
\centering
\caption{External Clustering Independent Region (Outside blue line).}
\label{cir:test}
\end{figure*}
\begin{figure*}[!htbp]
\centering
\begin{subfigure}{.5\textwidth}
  \centering
  \includegraphics[width=21pc]{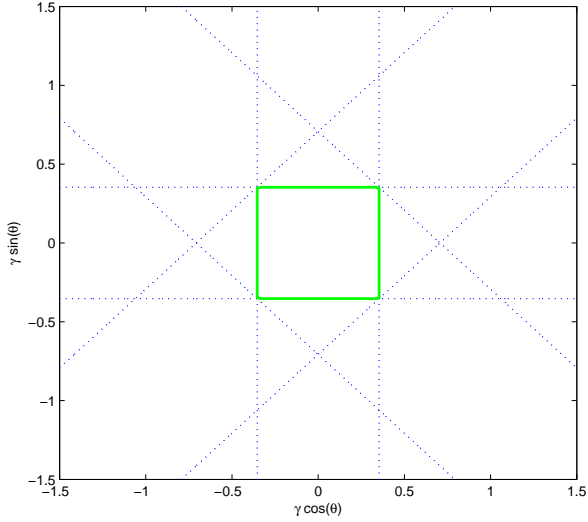}
  \caption{QPSK-BPSK}
  \label{cir_int:sub1}
\end{subfigure}%
\begin{subfigure}{.5\textwidth}
  \centering
  \includegraphics[width=21pc]{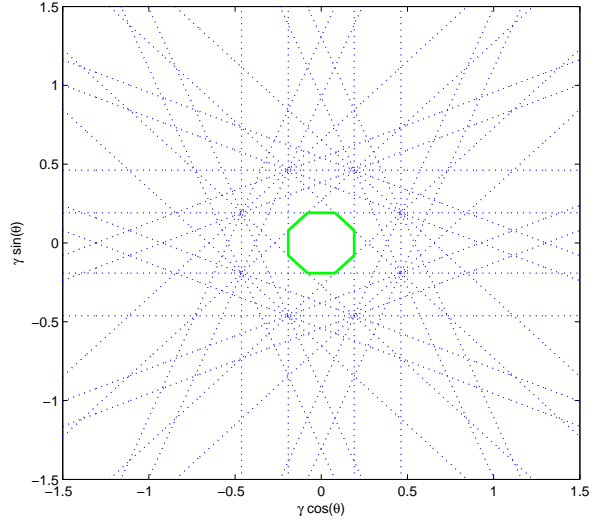}
  \caption{8PSK-BPSK}
  \label{cir_int:sub2}
\end{subfigure}
\centering
\caption{Internal Clustering Independent Region (Inside green line).}
\label{cir_int:test}
\end{figure*}
\subsection{Clustering Dependent Region}
 \begin{figure}[!htbp]
 \centering
 \includegraphics[width=21pc]{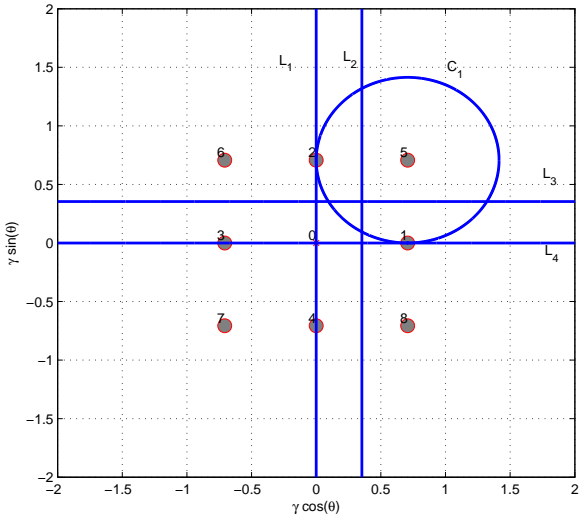}
 \caption{Obtaining clustering dependent region for the $\frac{1}{\sqrt{2}}(1+j)$ SFS in QPSK-BPSK PLNC.}
 \label{reg_example}
 \end{figure}
 \begin{figure*}[!htbp]
  \centering
  \begin{subfigure}{.5\textwidth}
    \centering
    \includegraphics[width=21pc]{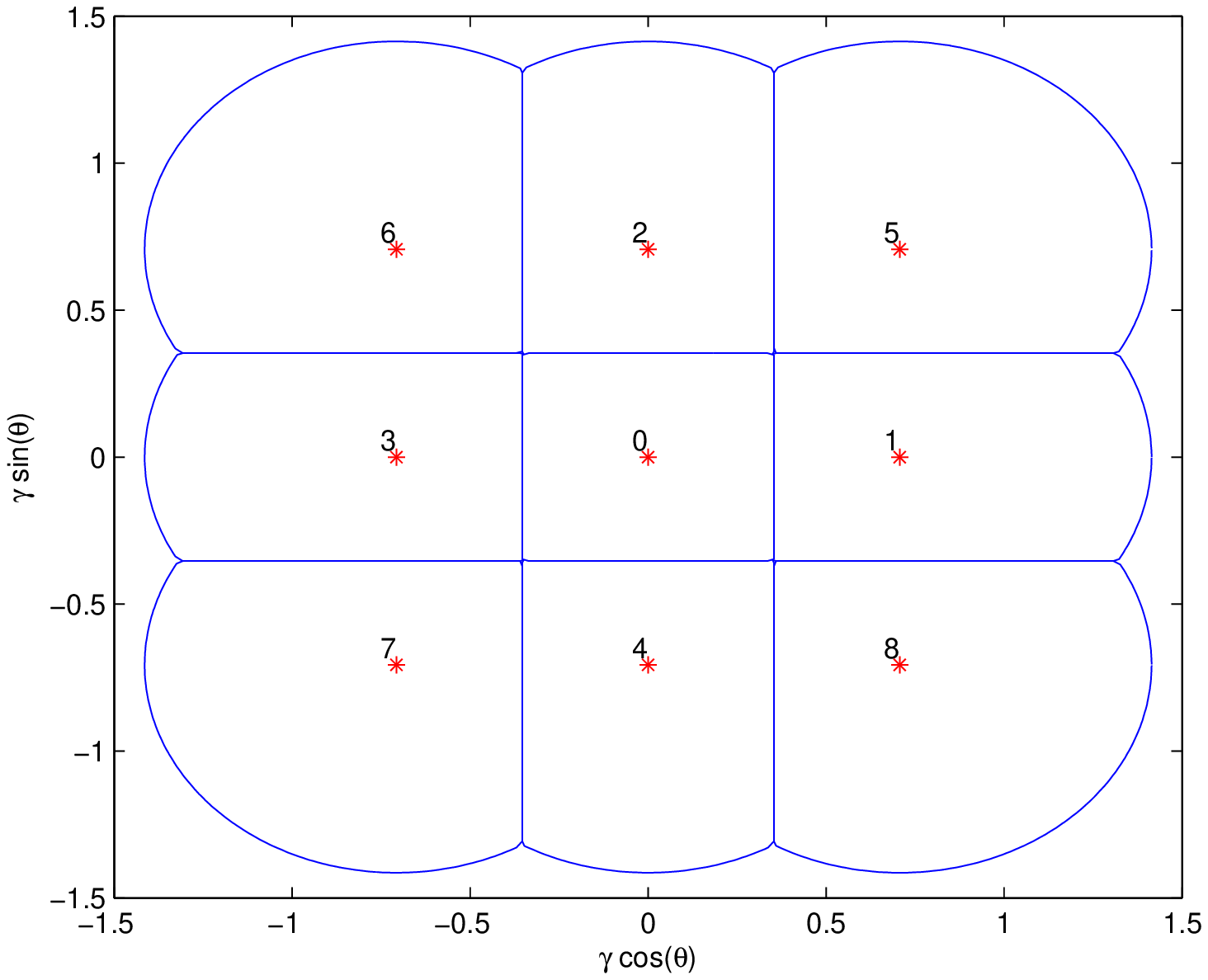}
    \caption{QPSK-BPSK}
    \label{reg:sub1}
  \end{subfigure}%
  \begin{subfigure}{.5\textwidth}
    \centering
    \includegraphics[width=21pc]{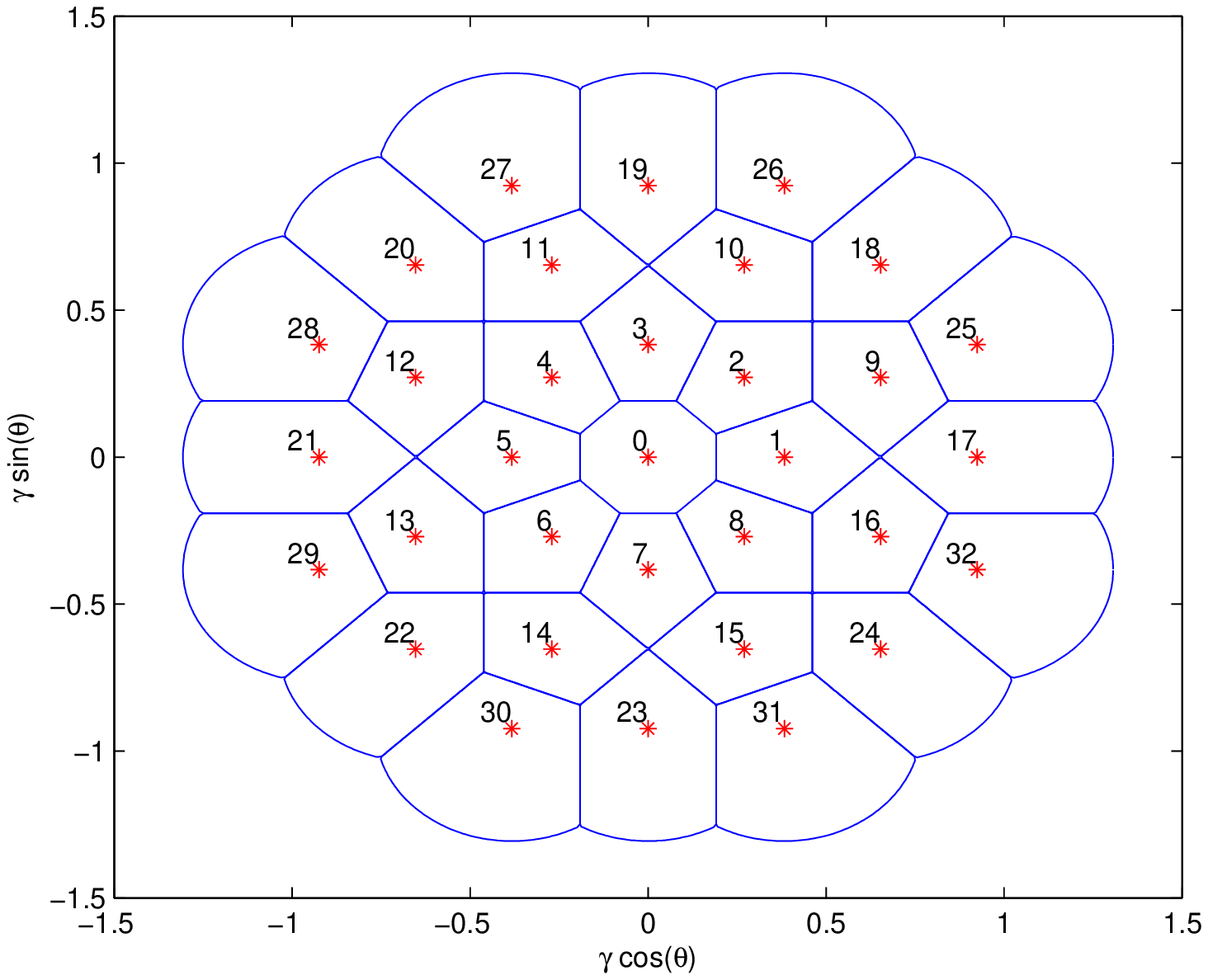}
    \caption{8PSK-BPSK}
    \label{reg:sub2}
  \end{subfigure}
  \centering
  \caption{Clustering Dependent Regions for PLNC with heterogeneous PSK modulations.}
  \label{reg:test}
  \end{figure*}
\section{Quantization of Fade State Plane}
The mappings in the previous section removed the particular fade state. However, the best possible mapping for each $(\gamma,\theta)$ has to be found. We follow the scheme proposed by \cite{muralidharan2013wireless} for clustering at the relay, which is based on using the maps which were used for the removal of SFS. For a given realization of $\gamma e^{j\theta}$ we use one of the Latin Rectangles which is used to remove an SFS based on the criteria given below \cite{muralidharan2013wireless}.  
\begin{definition} The distance metric $\mathcal{D}$ is defined as
\begin{align}
\mathcal{D}(\gamma,\theta,d_1,d_2) \triangleq |d_1+&\gamma e^{j\theta}d_2 | , \nonumber\\ &(0,0) \neq (d_1 ,d_2) \in \mathcal{S}_1 \times \mathcal{S}_2.
\end{align}
\end{definition}
Let $(d_1(h),d_2(h))=(d_1 ,d_2)\in\mathcal{S}_1\times \mathcal{S}_2$, where $\frac{-d_1}{d_2}=h$. 
The criterion for selecting the SFS, whose map is to be used for a given fade state is as follows: \\ 
\emph{If $\underset{d_1,d_2}{argmin}~\mathcal{D}(\gamma,\theta,d_1,d_2) = (d_1(h),d_2(h))$ then choose the clustering which removes the singular fade state h.}
\subsection{Clustering Independent Region}
\begin{definition}
\text{(Muralidharan et al. \cite{muralidharan2013wireless})} The set of all values of fade states for which any clustering satisfying the exclusive law gives the same minimum cluster distance is called Clustering Independent Region.
\end{definition} First, an upper bound on the minimum cluster distance as in \cite{muralidharan2013wireless} is obtained. The existence of such a region is proved by adapting Lemma 10 from \cite{namboodiri2013physical} for the HePNC scenario.
\begin{theorem}
For any clustering $\mathcal{C}$ satisfying the exclusive law, with unit energy $M_i$-PSK signal sets, $i=1, 2$, $d_{min}(\mathcal{C};\gamma,\theta)$ is upper-bounded as,
\begin{equation}
d_{min}(\mathcal{C};\gamma,\theta) \leq min \left\lbrace 2sin\left(\frac{\pi}{M_1}\right), 2\gamma sin\left(\frac{\pi}{M_2}\right) \right\rbrace.
\end{equation}
\end{theorem}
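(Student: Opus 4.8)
The plan is to exhibit, for each of the two terms inside the minimum, an explicit pair of points in $\mathcal{S}_1\times\mathcal{S}_2$ that the exclusive law \emph{forces} into distinct clusters, so that the defining minimisation for $d_{min}(\mathcal{C};\gamma,\theta)$ is automatically bounded above by the Euclidean distance of that pair. The crucial structural fact is that $d_{min}(\mathcal{C};\gamma,\theta)$ is a minimum taken over all pairs lying in \emph{different} clusters; hence any single admissible pair furnishes an upper bound, and the tighter of the two bounds I produce will give the stated result.

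First I would fix $x_B = x_B'$ and choose $x_A, x_A' \in \mathcal{S}_1$ to be nearest neighbours in the $M_1$-PSK constellation. The exclusive law guarantees $\mathcal{M}^{\gamma,\theta}(x_A,x_B) \neq \mathcal{M}^{\gamma,\theta}(x_A',x_B)$ whenever $x_A \neq x_A'$, so this pair is admissible in the minimisation. Its distance collapses to $|(x_A - x_A') + \gamma e^{j\theta}\cdot 0| = |x_A - x_A'|$, and by the difference-constellation formula already derived for $s_l - s_m$, the smallest value of $|x_A - x_A'|$ over distinct points of a symmetric unit-energy $M_1$-PSK set is $2\sin(\pi/M_1)$ (the $n=1$ case). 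This yields $d_{min}(\mathcal{C};\gamma,\theta) \leq 2\sin(\pi/M_1)$.

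Symmetrically, I would fix $x_A = x_A'$ and take $x_B, x_B'$ to be nearest neighbours in $\mathcal{S}_2$. The second branch of the exclusive law again places them in distinct clusters, and now the distance reduces to $|\gamma e^{j\theta}(x_B - x_B')| = \gamma\,|x_B - x_B'|$, whose minimum over distinct points of $\mathcal{S}_2$ is $2\gamma\sin(\pi/M_2)$. This gives the second bound $d_{min}(\mathcal{C};\gamma,\theta) \leq 2\gamma\sin(\pi/M_2)$. Combining the two and retaining the smaller completes the argument.

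There is no substantive obstacle here: the entire content is the observation that the exclusive law converts the combinatorial ``different-cluster'' requirement into an automatic guarantee for the two specific families of same-row and same-column pairs, and that bounding a minimum from above needs only one witness. The single point requiring care is confirming that the nearest-neighbour distances of the two constellations are exactly $2\sin(\pi/M_1)$ and $2\sin(\pi/M_2)$, which is immediate from the expression for $s_l - s_m$ specialised to $l-m=1$.
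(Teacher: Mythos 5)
Your proposal is correct and follows essentially the same route as the paper: the paper likewise uses the exclusive law to guarantee that same-$x_A$ pairs and same-$x_B$ pairs lie in distinct clusters, restricts the defining minimisation to each such family in turn, and obtains the bounds $2\gamma\sin(\pi/M_2)$ and $2\sin\left(\frac{\pi}{M_1}\right)$ from the nearest-neighbour distances of the respective PSK constellations. Your explicit ``single witness suffices to bound a minimum'' framing is just a slightly more pedestrian phrasing of the paper's restricted minimisation, so there is nothing substantive to add.
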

\begin{proof}
As $\mathcal{C}$ satisfies the exclusive law, $\mathcal{L_\mathcal{C}}(x_A,x_B) \neq \mathcal{L_\mathcal{C}}(x_A,x_B')$ where $x_A \in \mathcal{S}_1$ and  $x_B , x_B' \in \mathcal{S}_2$. We have 
\begin{align}
& d_{min}(\mathcal{C};\gamma,\theta) = \nonumber \\ & \underset{ \underset{(x_A,x_B) \neq (x_A',x_B')\in (\mathcal{S}_1 \times \mathcal{S}_2)}{\mathcal{L}_{\mathcal{C}}(x_A,x_B) \neq \mathcal{L}_{\mathcal{C}}(x_A',x_B')}}{min} |(x_A-x_A')+ \gamma e^{j\theta} (x_B-x_B')| \nonumber \\
  & \leq \underset{(x_A,x_B) \neq (x_A,x_B') \in \mathcal{S}_1 \times \mathcal{S}_2 }{min} \gamma |e^{j\theta} (x_B-x_B')| \nonumber \\
    & = \gamma \underset{x_B \neq x_B' \in \mathcal{S}_2}{min}|(x_B-x_B')| = 2\gamma sin(\frac{\pi}{M_2}).
\end{align}
Similarly, from the fact that $\mathcal{L_{\mathcal{C}}}(x_A,x_B) \neq \mathcal{L_{\mathcal{C}}}(x_A',x_B)$, where $x_A, x_A' \in \mathcal{S}_1 ,x_B \in \mathcal{S}_2$ and $x_A \neq x_A'$, we have  $d_{min}(\mathcal{C};\gamma,\theta) \leq 2\sin(\frac{\pi}{M_1})$. 
\end{proof}
From the definitions, $d_{min}(\gamma e^{j\theta}) \leq d_{min}(\mathcal{C};\gamma,\theta)$. Hence, from Theorem 1, it follows that regardless of which $\mathcal{C}$ is considered, $d_{min}(\mathcal{C};\gamma,\theta) = 2sin(\pi/M_1)$ when $\gamma \gg 1$. Similarly, for $\gamma \ll 1$, $d_{min}(\mathcal{C};\gamma,\theta) = 2\gamma sin(\pi/M_2)$.     

Let $\mathcal{R}^{ext}_{CI}$ and $\mathcal{R}^{int}_{CI}$ denote the clustering independent regions corresponding to $\gamma>1$ and $\gamma<1$ respectively. From Theorem 1, for $\gamma>1$ , $\min(2sin(\pi/M_1),2\gamma sin(\pi/M_2)) = 2sin(\pi/M_1)$. Hence,
\begin{align}
\mathcal{R}^{ext}_{CI} =  &\lbrace \gamma e^{j\theta} : |x_{k_1,n_1,1}+\gamma e^{j\theta} x_{k_2,n_2,2}| \geq 2sin(\pi/M_1), \nonumber\\
&\forall (0,0) \neq (x_{k_1,n_1,1},x_{k_2,n_2,2}) \in \Delta\mathcal{S}_1 \times \Delta\mathcal{S}_2, \nonumber\\ &\gamma > 1 , -\pi \leq \theta < \pi \rbrace. 
\label{ext_ci_eq2}
\end{align}

Let $c_{l_1,l_2}$, $1 \leq 2^{\delta}l_2 \leq l_1 \leq  M_1/2$ and $1 \leq l_2 \leq M_2/2$ denote the circle centered at the origin with radii $r_{l_1,l_2} = sin(l_1 \pi /M_1)/sin(l_2 \pi /M_2)$. Let $C_{l_1,l_2}$ denote the set of circles whose centers are the SFS which lie on $c_{l_1,l_2}$ and have radii $r_{l_2} = sin(\pi /M_1)/sin(2^{\delta} l_2 \pi/M_1)$. The following theorem generalizes the boundary of external $CI$ region given in \cite{muralidharan2013wireless}.

\begin{theorem}
The region $\mathcal{R}^{ext}_{CI}$ is the common outer envelope  region formed by the circles $C_{l_1,l_2}$, $1 \leq 2^{\delta}l_2 \leq l_1 \leq  M_1/2$.
\end{theorem}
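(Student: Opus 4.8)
The plan is to convert the analytic condition (\ref{ext_ci_eq2}) into a statement about distances to the singular fade states, so that the geometric picture of circles emerges automatically. For each nonzero pair $(x_{k_1,n_1,1},x_{k_2,n_2,2})\in\Delta\mathcal{S}_1\times\Delta\mathcal{S}_2$ with $x_{k_2,n_2,2}\neq 0$, I would factor
\[
x_{k_1,n_1,1}+\gamma e^{j\theta}x_{k_2,n_2,2}=x_{k_2,n_2,2}\bigl(\gamma e^{j\theta}-h_s\bigr),\qquad h_s\triangleq-\frac{x_{k_1,n_1,1}}{x_{k_2,n_2,2}},
\]
where $h_s$ is exactly the SFS attached to that pair. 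Since $|x_{k_2,n_2,2}|=2\sin(n_2\pi/M_2)$, the defining inequality $|x_{k_1,n_1,1}+\gamma e^{j\theta}x_{k_2,n_2,2}|\geq 2\sin(\pi/M_1)$ is equivalent to $|\gamma e^{j\theta}-h_s|\geq \sin(\pi/M_1)/\sin(n_2\pi/M_2)$. The pairs with $x_{k_2,n_2,2}=0$ give $|x_{k_1,n_1,1}|\geq 2\sin(\pi/M_1)$, which holds for every nonzero element of $\Delta\mathcal{S}_1$, and the pairs with $x_{k_1,n_1,1}=0$ give $\gamma\cdot 2\sin(n_2\pi/M_2)\geq 2\sin(\pi/M_1)$, which holds for all $\gamma>1$ because $M_2\leq M_1$. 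Hence $\mathcal{R}^{ext}_{CI}$ is precisely the set of $\gamma e^{j\theta}$ with $\gamma>1$ lying on or outside every disc centred at an SFS $h_s$ with radius $\sin(\pi/M_1)/\sin(n_2\pi/M_2)$.

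Next I would match these discs with the families $C_{l_1,l_2}$. By Lemma 2 every SFS sits on one origin-centred circle $c_{l_1,l_2}$ of radius $r_{l_1,l_2}=\sin(l_1\pi/M_1)/\sin(l_2\pi/M_2)$ with $l_1=n_1$, $l_2=n_2$; Lemma 1, applied after writing $\sin(n_2\pi/M_2)=\sin(2^{\delta}n_2\pi/M_1)$, guarantees that the index pair, and therefore the disc radius $r_{l_2}=\sin(\pi/M_1)/\sin(2^{\delta}l_2\pi/M_1)$, is determined unambiguously by the location of the SFS. Thus the discs centred on $c_{l_1,l_2}$ are exactly the family $C_{l_1,l_2}$. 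I would also record the equivalence $r_{l_1,l_2}\geq 1\iff 2^{\delta}l_2\leq l_1$, which follows from the strict monotonicity of $\sin$ on $(0,\pi/2]$ since both $l_1\pi/M_1$ and $2^{\delta}l_2\pi/M_1$ lie in that interval; this is what singles out the index range $1\leq 2^{\delta}l_2\leq l_1\leq M_1/2$ appearing in the statement.

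With the region rewritten as a common exterior of discs, the theorem reduces to identifying which circles supply the outer boundary for $\gamma>1$. One inclusion is free: any point of $\mathcal{R}^{ext}_{CI}$ lies outside every disc, in particular outside those of the families $C_{l_1,l_2}$ with $2^{\delta}l_2\leq l_1$, so it lies in their common outer envelope. The substance is the reverse inclusion, and this is where I expect the real difficulty. I would fix a direction $\theta$ and compare, along that ray, the largest radius at which it meets a disc of an \emph{outer} family ($2^{\delta}l_2\leq l_1$, centre modulus $\geq1$) against the largest radius at which it meets a disc of an \emph{inner} family ($l_1<2^{\delta}l_2$, centre modulus $<1$), and show that the inner discs do not protrude past the outer envelope into $\{\gamma>1\}$. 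This is a sector-by-sector domination estimate that must exploit the equispacing of the $M_1$ SFS on each circle (Lemma 2) together with the sine-ratio rigidity of Lemma 1, and it will almost certainly split into the four phase cases of Lemma 2. For $\delta>0$ the estimate is genuinely delicate, because an inner disc can reach beyond $\gamma=1$, so one must verify that any such protrusion is always covered by a neighbouring outer disc; pinning this down cleanly is the crux of the whole argument, and the main obstacle I anticipate.

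Once the domination step is in hand the conclusion is immediate: the points with $\gamma>1$ lying outside all discs are exactly those lying outside the outer-family discs, i.e.\ the common outer envelope of the circles $C_{l_1,l_2}$ with $1\leq 2^{\delta}l_2\leq l_1\leq M_1/2$, which is the assertion of the theorem.
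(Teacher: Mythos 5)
Your opening reduction is exactly the paper's own first step (the proof of Lemma 5 in the Appendix): dividing through by $x_{k_2,n_2,2}$ converts (\ref{ext_ci_eq2}) into the condition $|\gamma e^{j\theta}-h_s|\geq \sin(\pi/M_1)/\sin(2^{\delta}n_2\pi/M_1)$ on discs centred at the SFS, and your explicit disposal of the degenerate pairs with $x_{k_1,n_1,1}=0$ or $x_{k_2,n_2,2}=0$ is a small point the paper passes over silently. The gap is in your final step, which you yourself flag as unproven: the sector-by-sector domination claim that every protrusion of an inner-family disc (centre modulus $<1$, i.e.\ $l_1<2^{\delta}l_2$) into $\{\gamma>1\}$ is covered by a neighbouring outer-family disc. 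You leave this as a programme rather than a proof, and since the one inclusion you do establish is immediate, essentially all of the theorem's content lives in the step you have not carried out.

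Worse, the programme as designed cannot be completed, because the domination claim is false when $\delta>0$. Take QPSK--BPSK ($M_1=4$, $M_2=2$, $\delta=1$). The only outer family is $C_{2,1}$: four discs of radius $1/\sqrt{2}$ centred at $(\pm 1\pm j)/\sqrt{2}$. Adjacent centres are at distance $2\sin(\pi/4)=\sqrt{2}$, which equals the sum of the radii, so adjacent outer discs are merely \emph{tangent}, and the tangency point $1/\sqrt{2}$ on $\theta=0$ is precisely an inner SFS. The inner disc (centre $1/\sqrt{2}$, radius $1/\sqrt{2}$) therefore protrudes through the gap out to $\sqrt{2}$: for instance $1.2e^{j0.1}$ lies at distance $\approx 0.50<1/\sqrt{2}$ from $1/\sqrt{2}$ but at distance $\approx 0.76>1/\sqrt{2}$ from $e^{j\pi/4}$ and $\approx 0.96$ from $e^{-j\pi/4}$, so a set of positive area with $\gamma>1$ is excluded from $\mathcal{R}^{ext}_{CI}$ \emph{only} by an inner disc; on the positive real axis $\mathcal{R}^{ext}_{CI}$ begins at $\sqrt{2}$, not at the boundary of the union of the $C_{2,1}$ discs. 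The paper's proof avoids this trap: the shading inequality in the proof of Lemma 5 (see (\ref{ext_ci_eq1})) is quantified over \emph{all} nonzero difference pairs, hence over discs centred at inner SFS as well, so the bounding envelope is allowed to contain arcs of the inner circles (the bulges along $\theta=0,\pi/2,\dots$ visible in Fig.~\ref{cir:sub1}); Lemmas 6 and 7 (the adapted Lemmas 19 and 20 of \cite{muralidharan2013wireless}) are then invoked only to certify that the annulus between the unit circle $c_O$ and the outermost SFS circle $c_{M_1/2,1}$ is fully shaded, so that no unshaded pocket with $\gamma>1$ survives strictly inside the envelope. Your intuition that inner discs are swallowed by outer ones is correct in the homogeneous case $\delta=0$, which is presumably what suggested it, but for heterogeneous PSK the correct route is to prove the annulus-covering lemmas and let the inner-disc arcs form part of the envelope rather than trying to dominate them away.
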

\begin{proof}
See Appendix.
\end{proof}
Similarly, 
\begin{align}
\mathcal{R}^{int}_{CI} =  &\lbrace \gamma e^{j\theta} : |x_{k_1,n_1,1}+\gamma e^{j\theta} x_{k_2,n_2,2}| \geq 2 \gamma sin(\pi/M_2), \nonumber\\
&\forall (0,0) \neq (x_{k_1,n_1,1},x_{k_2,n_2,2}) \in \Delta\mathcal{S}_1 \times \Delta\mathcal{S}_2, \nonumber\\ &\gamma < 1 , -\pi \leq \theta < \pi \rbrace
\label{int_ci_eq}
 \end{align}

The transformation $\gamma' e^{j\theta'} = \frac{1}{\gamma e^{j\theta}}$, is called complex inversion. It can be verified that by applying complex inversion in (\ref{ext_ci_eq2}) we do not get (\ref{int_ci_eq}) unless $M_1 = M_2$. Unlike Lemma 11 of \cite{muralidharan2013wireless}, getting internal clustering independent region for heterogeneous PLNC with PSK is non-trivial. The following theorem gives the general method to obtain the internal independent region for PLNC using heterogeneous PSK modulations.

\begin{theorem}
The region $\mathcal{R}^{int}_{CI}$ is the region formed by the intersection of following regions \\
\begin{equation}
|\gamma e^{j\theta} - c_{Int}| \geq r_{Int}, ax+by \leq c'_{Int},
\label{equation_ineq1}
\end{equation}
where,  
\begin{align}
c_{Int} &= -\left(\frac{x_{k1,n1,1}}{x_{k2,n2,2}}\right)\left(\frac{sin(n_2 \pi/M_2)^2}{sin(n_2 \pi/M_2)^2-sin(\pi/M_2)^2}\right) \\
r_{Int} &= \frac{sin(n_1 \pi/M_1)sin(\pi/M_2)}{2(sin(n_2 \pi/M_2)^2-sin(\pi/M_2)^2)}
\end{align}
$\forall (k_i,n_i)$ such that $0 \leq k_i\leq M_i-1$,$1 \leq n_i \leq M_i/2$ and  
 \begin{align}
 a_{Int} &= \mathit{Re}(x^{*}_{k_1,n_1,1}x_{k_1,1,2}) , \\
 b_{Int} &= \mathit{Im}(x^{*}_{k_1,n_1,1}x_{k_1,1,2}) , \\
 c'_{Int} &= |x_{k_1,n_1,1}|^2 /2 , 
 \end{align}
 where $x = \mathit{Re}(\gamma e^{j\theta})$, $y = \mathit{Im}(\gamma e^{j\theta})$ and * denotes complex conjugation.
\end{theorem}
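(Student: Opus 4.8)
The plan is to work directly from the defining description (\ref{int_ci_eq}) of $\mathcal{R}^{int}_{CI}$, which already incorporates the bound $2\gamma\sin(\pi/M_2)$ coming from Theorem 1. By construction that set is the intersection, over all nonzero difference pairs $(x_{k_1,n_1,1},x_{k_2,n_2,2}) \in \Delta\mathcal{S}_1 \times \Delta\mathcal{S}_2$, of the regions carved out by the single magnitude inequality $|x_{k_1,n_1,1} + \gamma e^{j\theta} x_{k_2,n_2,2}| \geq 2\gamma\sin(\pi/M_2)$. Hence it suffices to convert one such inequality into an explicit region in the complex plane and then intersect; the restriction $\gamma<1$ is carried along throughout. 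Abbreviate $w=\gamma e^{j\theta}$, $u=x_{k_1,n_1,1}$, $v=x_{k_2,n_2,2}$, and recall from (\ref{diff_const_eqn}) that $|u|=2\sin(n_1\pi/M_1)$ and $|v|=2\sin(n_2\pi/M_2)$.

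First I would square the constraint (both sides are nonnegative), use $\gamma=|w|$, and expand, obtaining after rearrangement the quadratic form
\begin{align}
\big(|v|^2 - 4\sin^2(\tfrac{\pi}{M_2})\big)|w|^2 + 2\,\mathrm{Re}(u^{*} v\, w) + |u|^2 \geq 0 .
\end{align}
The sign of the leading coefficient
\begin{align}
|v|^2 - 4\sin^2(\tfrac{\pi}{M_2}) = 4\big(\sin^2(\tfrac{n_2\pi}{M_2}) - \sin^2(\tfrac{\pi}{M_2})\big)
\end{align}
is governed entirely by $n_2$: since $n_2\pi/M_2 \in (0,\pi/2]$ and $\sin$ is strictly increasing there, this coefficient is zero exactly when $n_2=1$ and strictly positive for $2 \leq n_2 \leq M_2/2$. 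This dichotomy is what produces the two families of constraints in (\ref{equation_ineq1}).

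For $n_2 \geq 2$ the leading coefficient $A=4(\sin^2(n_2\pi/M_2)-\sin^2(\pi/M_2))>0$, so I divide by $A$ and complete the square in $w$. Using $|u^{*}v|^2=|u|^2|v|^2$ and the simplification $|u|^2|v|^2 - A|u|^2 = 4|u|^2\sin^2(\pi/M_2)$, the right-hand constant collapses and I arrive at the circle-exterior condition $|w-c_{Int}| \geq r_{Int}$. The center reads $c_{Int} = -\,u v^{*}/A = -(u/v)\,\sin^2(n_2\pi/M_2)/(\sin^2(n_2\pi/M_2)-\sin^2(\pi/M_2))$, matching the stated $c_{Int}$, and completing the square likewise yields $r_{Int}$. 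As a consistency check, the singular fade state $h=-u/v$ gives $u+hv=0$, so it violates the strictly positive right-hand side and therefore lies inside the excluded disk; the feasible set is indeed the \emph{exterior} of the circle, with $c_{Int}$ on the ray through $h$. For $n_2=1$ the leading coefficient vanishes and the quadratic degenerates to the linear inequality $\mathrm{Re}(u^{*}v\,w) \geq -|u|^2/2$; substituting $w=x+jy$ and reading off the real and imaginary parts of $u^{*}v$ produces the half-plane $a_{Int}x + b_{Int}y \leq c'_{Int}$ with $c'_{Int}=|x_{k_1,n_1,1}|^2/2$.

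I expect the principal obstacle to be bookkeeping rather than analysis. Two points need care. First, one must confirm that each quadratic constraint yields the exterior of a genuine circle (i.e. $A>0$ with positive radius), which rests on the strict monotonicity of $\sin$ on $(0,\pi/2]$ together with $n_2 \leq M_2/2$; note also that the circle formulas are meaningful only for $n_2\geq 2$, the $n_2=1$ pairs instead furnishing the half-planes. Second, and more delicate, is matching the half-plane to the specific index $x_{k_1,1,2}$ in the statement and fixing its orientation (the direction of the inequality and the signs of $a_{Int},b_{Int}$): among the nearest-neighbour differences of $\mathcal{S}_2$ one must identify the active pairing near each singular fade state and verify its phase alignment with $x_{k_1,n_1,1}$, so that intersecting over all pairs reproduces (\ref{equation_ineq1}) with no spurious or missing constraints. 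Finally, the whole construction is intersected with $\gamma<1$, consistent with the definition (\ref{int_ci_eq}).
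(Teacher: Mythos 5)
Your proposal is correct and takes essentially the same route as the paper, whose entire proof is the one line you expand: view (\ref{int_ci_eq}) as an intersection of per-pair constraints, square, and complete the magnitude square, with the leading coefficient $4\left(\sin^2(n_2\pi/M_2)-\sin^2(\pi/M_2)\right)$ vanishing exactly when $n_2=1$ (giving the half-planes) and positive for $2\le n_2\le M_2/2$ (giving the circle exteriors). One remark: carrying your computation to the end with $|x_{k_1,n_1,1}|=2\sin(n_1\pi/M_1)$ gives radius $\sin(n_1\pi/M_1)\sin(\pi/M_2)/\left(\sin^2(n_2\pi/M_2)-\sin^2(\pi/M_2)\right)$, i.e.\ twice the stated $r_{Int}$, and your own sanity check (the SFS $-x_{k_1,n_1,1}/x_{k_2,n_2,2}$ must lie strictly inside the excluded disk, which fails for the stated radius, e.g.\ at $M_1=8$, $M_2=4$, $n_2=2$) actually exposes a factor-of-two slip in the theorem statement rather than verifying the printed formula, so the phrase ``likewise yields $r_{Int}$'' should be replaced by the corrected value.
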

\begin{proof}
From (\ref{int_ci_eq}), $\mathcal{R}^{int}_{CI}$, is the intersection of all regions in the complex fade state plane which satisfy the inequality 
\begin{equation}
|x_{k_1,n_1,1}+\gamma e^{j\theta} x_{k_2,n_2,2}| \geq 2 \gamma sin(\pi/M_2).
\end{equation}
 By squaring on both sides of the above inequality and completing the magnitude square, we get the curves given in (\ref{equation_ineq1}).  
\end{proof}

 The region other than the clustering independent region is called clustering dependent region. From the criterion to associate a clustering to any given channel fade state, it is seen that every SFS has an associated region in the channel fade state plane in which the clustering which removes that SFS is used at the relay. Hence the region associated with the SFS $h$ is, 
 \begin{align}
 \mathcal{R}_{h} = 
\lbrace \gamma e^{j\theta} : \underset{(d_1,d_2) \in \Delta\mathcal{S}_1 \times \Delta\mathcal{S}_2}{argmin} &\mathcal{D}(\gamma,\theta,d_1,d_2) = (d_1',d_2') \nonumber\\ & where -d_1'/d_2' = h \rbrace.
 \end{align}
 From Lemma 2, we know that the $M_1$ SFS which lie on the same circle, have phase angles of the form $\frac{2 \pi l}{M_1}$ or $\frac{\pi}{M_1} + \frac{2 \pi l}{M_1}$ for $0 \leq l \leq M_1$. Hence, there is an angular symmetry of $\frac{2\pi}{M_1}$. Therefore, it suffices to consider those SFS which lie on the lines $\theta = 0$ and $\theta = \pi/M_1$ and use symmetry to obtain the regions $\mathcal{R}_{h}$ for all values of $h$.

The pairwise transition boundary formed by the SFS $h$ and $h'$, denoted by $c(h,h')$, is the set of values of $\gamma e^{j\theta}$ for which $|d_1(h)+\gamma e^{j\theta}d_2(h)|=|d_1(h')+\gamma e^{j\theta}d_2(h')|$.
\begin{theorem}
The pairwise transition boundaries are either circles or straight lines.
\end{theorem}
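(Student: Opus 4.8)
The plan is to recognise the defining equation of $c(h,h')$ as the classical Apollonius locus of two fixed points, which is known to be either a circle or a straight line. Write $z \triangleq \gamma e^{j\theta}$ and let $(d_1(h),d_2(h))$ and $(d_1(h'),d_2(h'))$ be the representative pairs associated with the two distinct finite singular fade states $h$ and $h'$. Because $h$ is a finite nonzero SFS, its representative has $d_2(h) \neq 0$ and $h = -d_1(h)/d_2(h)$; hence
\begin{equation}
d_1(h) + z\,d_2(h) = d_2(h)\,(z - h),
\end{equation}
so that $|d_1(h) + z\,d_2(h)| = |d_2(h)|\,|z - h|$, and the same factorisation holds for $h'$.

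First I would substitute these two factorisations into the defining condition $|d_1(h) + z\,d_2(h)| = |d_1(h') + z\,d_2(h')|$. This reduces the condition to
\begin{equation}
|d_2(h)|\,|z - h| = |d_2(h')|\,|z - h'|,
\end{equation}
or equivalently $|z - h|/|z - h'| = \rho$, where $\rho \triangleq |d_2(h')|/|d_2(h)|$ is a fixed positive constant. Thus $c(h,h')$ is precisely the set of points whose distances to the fixed points $h$ and $h'$ are in the constant ratio $\rho$.

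Next I would split into the two cases that produce the stated dichotomy. If $\rho = 1$, that is $|d_2(h)| = |d_2(h')|$, the condition becomes $|z - h| = |z - h'|$, whose locus is the perpendicular bisector of the segment joining $h$ and $h'$, i.e.\ a straight line. If $\rho \neq 1$, I would square both sides and expand using $|z - h|^2 = |z|^2 - 2\,\mathit{Re}(h^{*} z) + |h|^2$, obtaining
\begin{equation}
(1 - \rho^2)|z|^2 - 2\,\mathit{Re}\!\left((h^{*} - \rho^2 (h')^{*})\,z\right) + \left(|h|^2 - \rho^2 |h'|^2\right) = 0.
\end{equation}
Dividing by $1 - \rho^2 \neq 0$ and completing the magnitude square then gives $|z - c|^2 = r^2$ with centre $c = (h - \rho^2 h')/(1 - \rho^2)$, which is the equation of a circle; the radius satisfies $r^2 \geq 0$ automatically, since the Apollonius locus is nonempty whenever $\rho > 0$, $\rho \neq 1$ and $h \neq h'$. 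Therefore $c(h,h')$ is a line when $|d_2(h)| = |d_2(h')|$ and a circle otherwise.

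The computation is routine once the factorisation is made, so there is no serious obstacle; the only points needing care are the verification that $d_2(h) \neq 0$ (which legitimises the whole reduction) and the clean separation of the degenerate case $\rho = 1$. I expect the only conceptual subtlety to lie in interpreting the dichotomy: the straight-line case is exactly $|d_2(h)| = |d_2(h')|$, which always holds when $M_2 = 2$, since every nonzero element of $\Delta\mathcal{S}_2$ then has magnitude $2$. This recovers the ``straight lines only'' behaviour of the BPSK schemes noted in the introduction, whereas the unequal magnitudes $|d_2|$ that become possible for $M_2 \geq 4$ are precisely what introduce the circular arcs.
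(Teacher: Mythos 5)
Your proof is correct, and it takes a genuinely cleaner route than the paper's. The paper (reproducing Theorem 2 of Namboodiri et al.) squares the defining equality directly in terms of the representatives $d_1(h_i)$, $d_2(h_i)$, rearranges, and completes the magnitude square, substituting $h_i = -d_1(h_i)/d_2(h_i)$ only at the end to express the centre and radius; its case split on $|d_2(h_1)|^2 \neq |d_2(h_2)|^2$ versus equality is exactly your $\rho \neq 1$ versus $\rho = 1$ dichotomy. Your preliminary factorisation $d_1(h) + z\,d_2(h) = d_2(h)(z-h)$, legitimate since every finite SFS has $d_2(h) \neq 0$, reduces the boundary to the classical Apollonius locus $|z-h| = \rho\,|z-h'|$ \emph{before} any squaring, and this buys three things the paper's computation leaves implicit: the centre $c = (h - \rho^2 h')/(1-\rho^2)$ and radius $r = \rho\,|h-h'|/|1-\rho^2|$ are expressed directly in terms of the two singular fade states (one checks they agree with the paper's formulas after substituting $\rho^2 = |d_2(h')|^2/|d_2(h)|^2$, modulo a missing absolute value in the paper's denominator); nonemptiness of the circle, $r^2 \geq 0$, is automatic rather than unexamined; and the straight-line case is transparently the perpendicular bisector of the segment joining $h$ and $h'$. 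Your closing remark --- that $M_2 = 2$ forces all nonzero $|d_2|$ to equal $2$, hence straight-line boundaries only for the BPSK schemes, while $M_2 \geq 4$ is what introduces circular arcs --- is precisely the structural claim the paper makes in its introduction but never connects to this proof. As a side benefit, your derivation sidesteps a typo in the paper's proof, which says ``dividing by $|d_1(h_2)|^2 - |d_1(h_1)|^2$'' where the divisor should be $|d_2(h_1)|^2 - |d_2(h_2)|^2$.
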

\begin{proof}
We reproduce the proof provided by Namboodiri et al. [Theorem 2, \cite{namboodiri2013physical}] for completeness. The curve $c(h_1,h_2)$ is,
\begin{equation}
|d_1(h_1) + \gamma e^{j\theta} d_2(h_1)| = |d_1(h_2) + \gamma e^{j\theta} d_2(h_2)|.
\end{equation}
Squaring on both sides and manipulating, we get the following:
\begin{align}
|d_1(h_1) + \gamma e^{j\theta} d_2(h_1)|^2 &= |d_1(h_2) + \gamma e^{j\theta} d_2(h_2)|^2. \\
\gamma^2 (|d_2(h_1)|^2 - |d_2(h_2)|^2) &- 2\Re \{ \gamma e^{j\theta} (d_1^{*}(h_2)d_2(h_2) \nonumber \\
-d_1^{*}(h_1)d_2(h_1)) \} &= |d_1(h_2)|^2 - |d_1(h_1)|^2. 
\end{align}
Assume $|d_2(h_1)|^2 \neq |d_2(h_2)|^2$. Dividing by $|d_1(h_2)|^2 - |d_1(h_1)|^2 $, completing the magnitude square on the LHS and substituting $h_i = \frac{-d_1(h_i)}{d_2(h_i)}$, we get,
\begin{equation}
|\gamma e^{j\theta} - c| = r, 
\end{equation}
where
\begin{align}
c &= \frac{-h_2|d_2(h_2)|^2 + h_1|d_2(h_1)|^2}{|d_2(h_1)|^2 - |d_2(h_2)|^2},\\
r &= \frac{|d_2(h_1)d_2(h_2)(h_1 - h_2)|}{|d_2(h_1)|^2 - |d_2(h_2)|^2}.
\end{align}
If $|d_2(h_1)|^2 = |d_2(h_2)|^2$, we get the equation of a straight line $ax + by = c'$, where $x = \mathit{Re}(\gamma e^{j\theta})$, $y = \mathit{Im}(\gamma e^{j\theta})$ and 
\begin{align}
a &= \mathit{Re}(h_2)|d_2(h_2)|^2 - \mathit{Re}(h_1)|d_2(h_1)|^2, \\
b &= \mathit{Im}(h_2)|d_2(h_2)|^2 - \mathit{Im}(h_1)|d_2(h_1)|^2, \\
c' &= |d_1(h_2)|^2 - |d_1(h_1)|^2. \
\end{align}
\begin{figure*}[!htbp]
\centering
\begin{subfigure}{.5\textwidth}
  \centering
  \includegraphics[width=21pc]{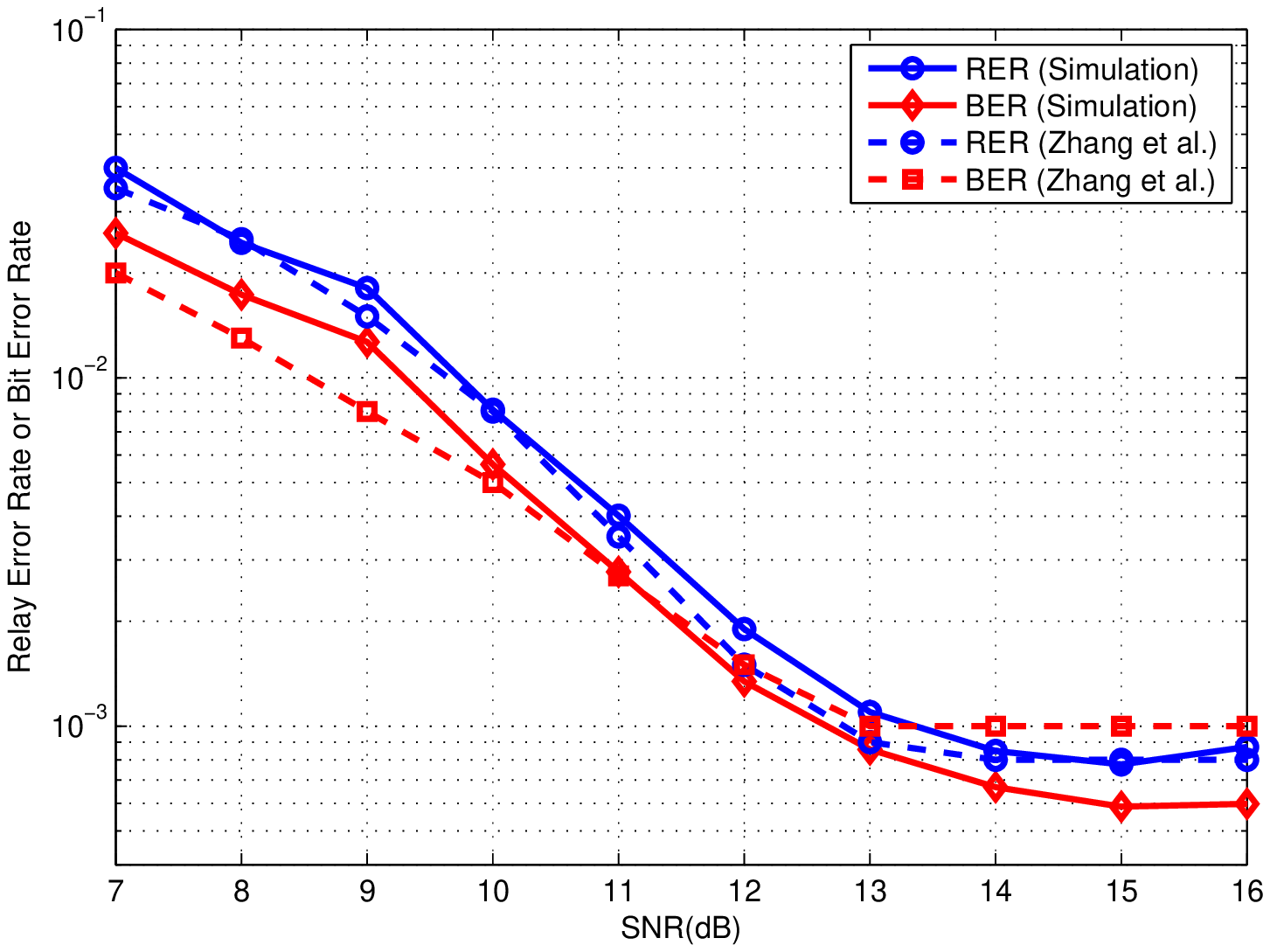}
  \caption{AWGN Channel}
  \label{ber_qpsk:sub1}
\end{subfigure}%
\begin{subfigure}{.5\textwidth}
  \centering
  \includegraphics[width=21pc]{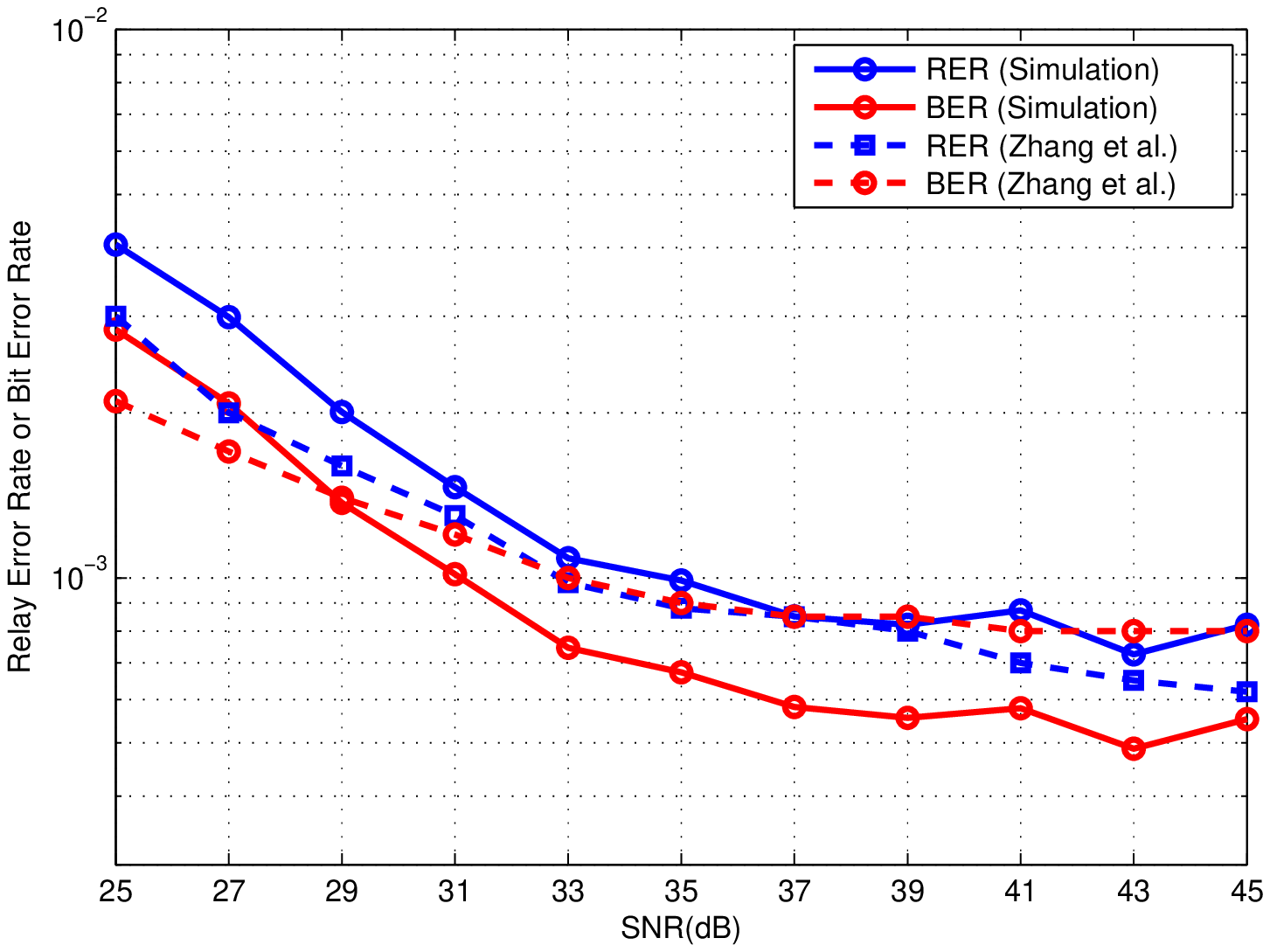}
  \caption{Rayleigh Fading Channel}
  \label{ber_qpsk:sub2}
\end{subfigure}
\centering
\caption{BER and RER for QPSK-BPSK PLNC}
\label{ber_qpsk:test}
\end{figure*}
\begin{figure*}[!htbp]
\centering
\begin{subfigure}{.5\textwidth}
  \centering
  \includegraphics[width=21pc]{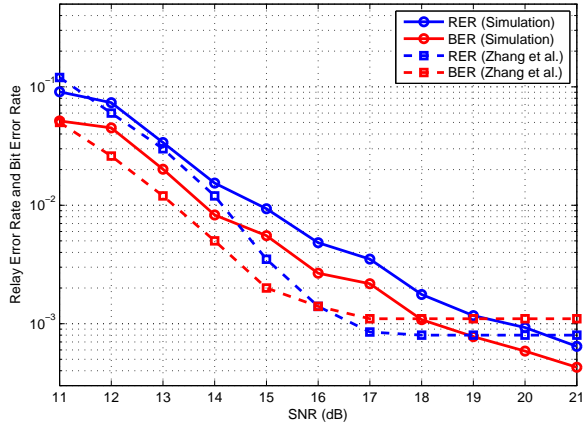}
  \caption{AWGN Channel}
  \label{ber_8psk:sub1}
\end{subfigure}%
\begin{subfigure}{.5\textwidth}
  \centering
  \includegraphics[width=21pc]{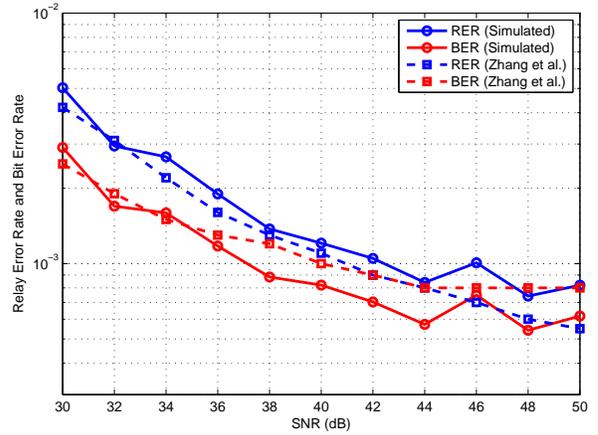}
  \caption{Rayleigh Fading Channel}
  \label{ber_8psk:sub2}
\end{subfigure}
\centering
\caption{BER and RER for 8PSK-BPSK PLNC}
\label{ber_8psk:test}
\end{figure*}
This proof is independent of the signal sets considered at the users.
\end{proof}
It can be easily verified that the following lemmas from \cite{muralidharan2013wireless}, which are used to get the regions associated with the SFS, can be applied to the case of  heterogeneous PSK modulations also.
\begin{lemma}
The region $\mathcal{R}_h$, where the SFS $h$ lies on the line $\theta = a;a \in \lbrace 0 , \pi/M_1 \rbrace$, lies inside the wedge formed by the lines $\theta = a - \pi/ M_1 $ and  $\theta = a + \pi/ M_1 $.
\end{lemma}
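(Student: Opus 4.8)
The plan is to prove the contrapositive: I will show that if the phase $\theta$ of the fade state $\gamma e^{j\theta}$ lies \emph{outside} the wedge $(a-\pi/M_1,\,a+\pi/M_1)$, then $h$ cannot be the minimiser in the criterion defining $\mathcal{R}_h$, because another singular fade state lying on the \emph{same} circle attains a strictly smaller value of the distance metric $\mathcal{D}$. Consequently no such fade state can belong to $\mathcal{R}_h$, which is exactly the claim.

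The first step is an algebraic reduction of $\mathcal{D}$ at the pair $(d_1(h),d_2(h))$ associated with $h$. Since $-d_1(h)/d_2(h)=h$, we have $d_1(h)=-h\,d_2(h)$, so
\[
\mathcal{D}(\gamma,\theta,d_1(h),d_2(h)) = |d_1(h)+\gamma e^{j\theta}d_2(h)| = |d_2(h)|\,\bigl|\gamma e^{j\theta}-h\bigr|.
\]
Up to the scalar $|d_2(h)|$, the metric at $h$ is therefore just the Euclidean distance in the fade-state plane from $\gamma e^{j\theta}$ to the point $h$ itself.

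The second step invokes the structure from Lemma 2. The SFS $h$ sits on a circle of radius $r=\sin(n_1\pi/M_1)/\sin(n_2\pi/M_2)$ carrying $M_1$ equispaced singular fade states $h_l=r\,e^{j(a+2\pi l/M_1)}$, $0\le l\le M_1-1$; all of their difference pairs arise from the same $n_1,n_2$ with only $k_1,k_2$ varying, so they share the magnitudes $|d_1(h_l)|=2\sin(n_1\pi/M_1)$ and $|d_2(h_l)|=2\sin(n_2\pi/M_2)$ independently of $l$. Hence the common factor $|d_2|$ cancels when comparing $\mathcal{D}$ among these SFS, and the comparison reduces to Euclidean distances $|\gamma e^{j\theta}-h_l|$. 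Using $|\gamma e^{j\theta}-r e^{j\alpha}|^2=\gamma^2+r^2-2\gamma r\cos(\theta-\alpha)$ with $\gamma,r>0$, the nearest such SFS to the fade state is the one whose phase is angularly closest to $\theta$.

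To conclude, let $h^\star$ be the SFS on this circle whose phase is angularly nearest to $\theta$. If $\theta$ is strictly outside $(a-\pi/M_1,\,a+\pi/M_1)$, then the angular distance from $\theta$ to $a$ exceeds $\pi/M_1$, whereas the angular distance to the equispaced set of circle phases is at most $\pi/M_1$; hence $h^\star\neq h$ and its angular distance to $\theta$ is strictly smaller. As $\cos$ is strictly decreasing on $[0,\pi]$, this yields $|\gamma e^{j\theta}-h^\star|<|\gamma e^{j\theta}-h|$, and multiplying by the common factor $|d_2|$ gives $\mathcal{D}$ at $h^\star$ strictly below $\mathcal{D}$ at $h$. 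Thus the minimising pair is not $(d_1(h),d_2(h))$ and $\gamma e^{j\theta}\notin\mathcal{R}_h$. The only point needing care is that several difference pairs, possibly on different circles, might share the same ratio $h$; this is handled by running the same argument for each such pair against its own circle-neighbour, so that none of them can be the global minimiser outside the wedge. The reason the half-angle is exactly $\pi/M_1$ is the $2\pi/M_1$ angular spacing of the SFS on each circle established in Lemma 2, since the wedge boundaries are precisely the angular bisectors between $h$ and its neighbours.
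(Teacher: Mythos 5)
Your proposal is correct, but note that the paper itself offers no proof of this lemma: it simply asserts that the relevant lemmas from Muralidharan et al.\ \cite{muralidharan2013wireless} ``can be easily verified'' to carry over to heterogeneous PSK, deferring entirely to that reference. What you have written is essentially the rotational-symmetry argument used there, reconstructed for the heterogeneous setting: the reduction $\mathcal{D}(\gamma,\theta,d_1(h),d_2(h)) = |d_2(h)|\,|\gamma e^{j\theta}-h|$, the cancellation of the common factor $|d_2|$ within a fixed $(n_1,n_2)$ class, and the law-of-cosines monotonicity showing the angularly nearest SFS on the circle strictly wins outside the closed wedge. The one genuinely heterogeneous ingredient is that each circle still carries exactly $M_1$ equispaced SFS (spacing $2\pi/M_1$, offset $\phi$ depending on the parities of $n_1,n_2$), which is precisely what Lemma 2 of this paper establishes and what fixes the wedge half-angle at $\pi/M_1$ rather than anything involving $M_2$; your proof correctly leans on this. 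You also anticipated the only delicate point in the adaptation: for $\delta \geq 1$ several $(n_1,n_2)$ classes can place SFS at the same point (e.g.\ all pairs with $n_1 = 2^{\delta}n_2$ collapse onto the unit circle, as counted in the proof of Lemma 2), so a single ratio $h$ may be realized by difference pairs with different magnitudes $|d_1|,|d_2|$; running the comparison class-by-class against each pair's own circle-neighbour, as you do, closes that gap, since a pair-by-pair strict domination suffices to exclude every pair with ratio $h$ from the argmin. So your write-up is not a different route from the paper --- it is the verification the paper omitted, and it is sound.
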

\begin{lemma}
To obtain the boundaries of $\mathcal{R}_h$, where the SFS $h$ lies on the line $\theta = a;a \in \lbrace 0 , \pi/M_1 \rbrace$, it is enough to consider those curves $c(h,h')$, $|h|\neq|h'|$ which lie on the lines $\theta = a - \pi/ M_1$, $\theta = a$  and  $\theta = a + \pi/ M_1 $.
\end{lemma}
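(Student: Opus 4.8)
The plan is to combine the confinement furnished by the preceding (wedge) lemma with a \emph{folding} argument that discards every competitor singular fade state lying off the three rays. The starting observation is that the decision metric can be rewritten as a weighted distance: writing $p = \gamma e^{j\theta}$ and $h = -d_1/d_2$,
\[
\mathcal D(\gamma,\theta,d_1,d_2) = |d_1 + p\, d_2| = |d_2|\,\bigl| p - h \bigr|,
\]
so that the criterion ``$\arg\min_{(d_1,d_2)}\mathcal D = (d_1(h),d_2(h))$'' is exactly the rule assigning $p$ to the SFS $h$ that minimizes the multiplicatively weighted distance $w_h\,|p-h|$, where $w_h = \min |d_2|$ over all pairs with $-d_1/d_2 = h$. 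By Lemma 2 every SFS on a fixed circle shares the same generating index $n_2$, hence $w_h$ is constant on each circle. Consequently $\partial\mathcal R_h$ is a union of pairwise curves $c(h,h')$ (circles or lines, by Theorem 4), and the task reduces to showing that the only $h'$ producing boundary pieces of $\mathcal R_h$ inside the wedge $W = \{a-\pi/M_1 \le \theta \le a+\pi/M_1\}$ are those with $\theta \in \{a-\pi/M_1,\,a,\,a+\pi/M_1\}$.

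Next I would record the symmetries of the weighted configuration. Since each $\Delta\mathcal S_i$ is closed under complex conjugation and under rotation by $2\pi/M_1$ (the latter being the $2\pi/M_1$ periodicity established in Lemma 2), the set of SFS together with the weight function $h\mapsto w_h$ is invariant under the dihedral group generated by the reflections across all rays $\theta = k\pi/M_1$, $k\in\mathbb Z$, and by rotation through $2\pi/M_1$. In particular, reflection across any such ray fixes every circle (the circles are centred at the origin) and therefore preserves the weights; and reflection across the central ray $\theta = a$ fixes $h$, so $\mathcal R_h$ is symmetric about $\theta = a$.

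The heart of the argument is the folding step. Fix $p\in W$ and let $g$ be any SFS whose angle $\beta = a + m\pi/M_1$ has $|m|\ge 2$, i.e.\ $g$ lies off the three rays and off the wedge. Because $p$ lies between the two edge rays $E_\pm:\theta = a\pm\pi/M_1$, it is on the near side of whichever edge $g$ is beyond; reflecting $g$ across that edge yields an SFS $g_1$ on the \emph{same circle} (hence $w_{g_1}=w_g$) and, by the elementary reflection inequality for a point on the near side of the mirror, with $|p-g_1|\le |p-g|$. Iterating across $E_+$ and $E_-$ folds the angle into the closed wedge; since $\beta$ and every reflecting ray are integer multiples of $\pi/M_1$, the terminal image $g'$ sits on one of the three rays, its weight equals $w_g$, and $w_{g'}|p-g'| \le w_g|p-g|$ for all $p\in W$. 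Thus $\mathcal D(\,\cdot\,;g')\le \mathcal D(\,\cdot\,;g)$ on $W$, so $g$ is never a strict nearest competitor there and $c(h,g)$ contributes no boundary of $\mathcal R_h$ in $\mathrm{int}(W)$. The only equality cases arise when $g'=h$, which occurs precisely for the same-radius neighbours of $h$ at $\theta = a\pm 2\pi/M_1$; these fold back to $h$ and their curves $c(h,g)$ are the two edge rays of $W$ already accounted for by the preceding (wedge) lemma. Excluding them is exactly the clause $|h|\ne|h'|$, leaving only the three-ray competitors of different radius, as claimed.

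I expect the folding/domination step to be the main obstacle. The delicate points are (i) ensuring the reflection inequality is applied with $p$ genuinely on the near side at \emph{every} stage of the iteration, which holds precisely because $W$ is the fundamental wedge between $E_+$ and $E_-$; and (ii) verifying that the weights survive the folding, which works only because the reflections used are across rays $\theta=k\pi/M_1$ that map each origin-centred circle to itself and hence preserve $n_2$ and $w_h$. Once these are in place, the reduction to the three rays—and hence the lemma—follows, and the argument is seen to depend on the heterogeneous data only through the $2\pi/M_1$ angular symmetry of Lemma 2, which is why the corresponding lemma of \cite{muralidharan2013wireless} transfers to the heterogeneous PSK setting.
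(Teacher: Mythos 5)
Your proposal cannot be matched against a proof in the paper, because the paper gives none: this lemma is stated with only the remark that the corresponding lemmas of \cite{muralidharan2013wireless} ``can be easily verified'' to carry over to heterogeneous PSK. Your folding argument is a legitimate, self-contained route, and it isolates exactly what that remark glosses over: the relay's selection rule is a multiplicatively weighted nearest-SFS rule $w_h\,|p-h|$ with $w_h=\min|d_2|$ over pairs realizing $h$; the weighted SFS configuration is invariant under the dihedral group generated by conjugation and rotation by $2\pi/M_1$; folding any off-ray competitor into the wedge $W$ can only decrease the metric uniformly on $W$, so the constraint against a dominated competitor is implied by the constraint against its fold, and boundaries of $\mathcal{R}_h$ inside $W$ can come only from competitors on the three rays. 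Same-circle competitors fold to $h$ itself and so contribute at most the wedge edges, which is precisely why the lemma excludes $|h|=|h'|$ and why this is consistent with the preceding wedge lemma. This is more explicit than anything in the paper and makes transparent why the homogeneous-case result of \cite{muralidharan2013wireless} transfers: the argument uses the heterogeneous data only through the $2\pi/M_1$ angular structure of Lemma 2.

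One justificatory claim as written is false and needs a (small) repair. You assert that ``each $\Delta\mathcal S_i$ is closed under complex conjugation and under rotation by $2\pi/M_1$.'' This holds for $\Delta\mathcal S_1$ but fails for $\Delta\mathcal S_2$ whenever $M_2<M_1$: for QPSK--BPSK, $\Delta\mathcal S_2=\{0,\pm 2j\}$ and rotation by $\pi/2$ sends $2j$ to $-2\notin\Delta\mathcal S_2$. The invariance you actually need is of the weighted SFS configuration, and it follows by acting on $d_1$ alone: if $-d_1/d_2=h$, then $(e^{j2\pi/M_1}d_1,\,d_2)$ is again an admissible pair realizing $e^{j2\pi/M_1}h$ with the same $|d_2|$, and $(\bar d_1,\bar d_2)$ realizes $\bar h$ with the same $|d_2|$; composing these two operations yields, for every line $\theta=k\pi/M_1$, a reflection mapping the SFS set to itself and preserving $w_h$. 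Relatedly, your sentence ``every SFS on a fixed circle shares the same generating index $n_2$'' is not literally guaranteed by Lemma 2 (on the unit circle all pairs with $n_1=2^{\delta}n_2$ produce coincident points), but constancy of $w_h$ on each circle is exactly what the corrected invariance delivers, so nothing downstream breaks. With these patches the folding step is sound: all SFS phases are integer multiples of $\pi/M_1$ by Lemma 2, $W$ lies in one closed half-plane of each edge line (since $2\pi/M_1\le\pi$) so the reflection inequality applies at every stage, each fold strictly reduces the integer angular offset so the iteration terminates on one of the three rays, and weak domination suffices for the ``it is enough to consider'' conclusion.
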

\begin{example}
Consider the case where $M_1=4$ and $M_2=2$. From Example 1, there are $9$ SFS. We select one $h \in \mathcal{H}$ and derive the boundaries of clustering dependent region. Let $h = \frac{1}{\sqrt{2}}(1+j)$. The difference constellation points for $h$ are $d_1 = \sqrt{2} + \sqrt{2}j$ and $d_2 = -2$. From Lemma 3, the region lies in the wedge formed by the lines $\theta = 0$ and $\theta = \pi/2$ (since $a=\pi/4$). These lines are denoted by $L_1$ and $L_4$.
 \begin{figure}[!htbp]
   \centering
   \includegraphics[width=21pc]{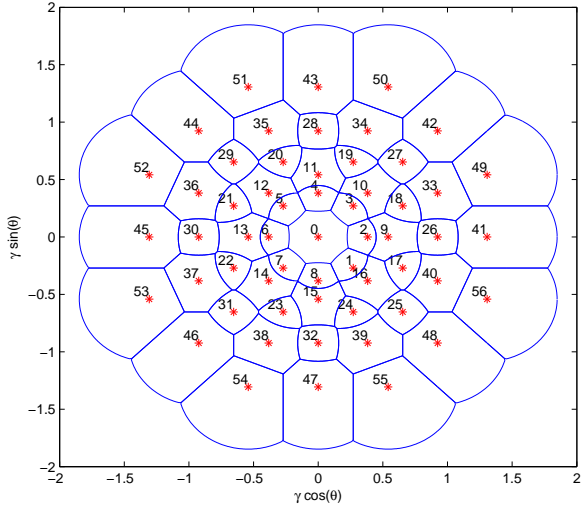}
   \caption{Clustering Dependent Regions for 8PSK-QPSK PLNC}
   \label{8psk_qpsk}
   \end{figure}
  \begin{figure}[!htbp]
 \centering
 \includegraphics[width=21pc]{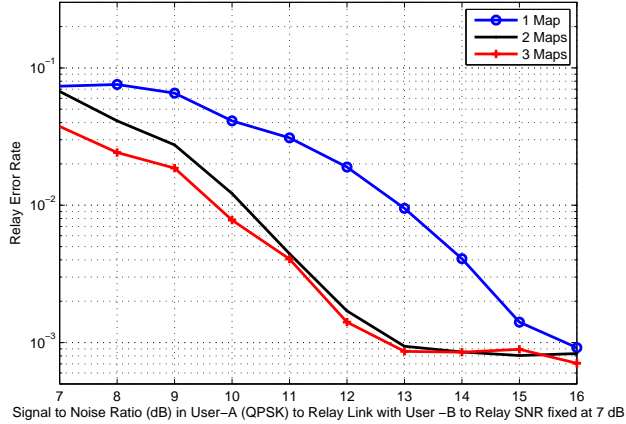}
 \caption{RER of QPSK-BPSK scheme in AWGN channel}
 \label{DIFF_MAP}
 \end{figure}  
We find the equation of the boundary between $h$ and $h' = 1/\sqrt{2}$. The difference constellation points for $h'$ are $d_1' = \sqrt{2}$ and $d_2' = -2$. From Theorem 1, the boundary, $c(h,h')$, satisfies the expression of a straight line (since $d_2 = d_2'$), which is $\gamma sin(\theta) = \frac{1}{2\sqrt{2}}$. This is the horizontal line $L_3$. Similarly, for $h' = j/\sqrt{2}$, $d_1' = \sqrt{2}j$ and $d_2' = -2$, and we get a vertical line $L_2$, which is $\gamma cos(\theta) = \frac{1}{2\sqrt{2}}$. 

Since the region is also bounded by the external clustering independent region, we have to consider the circle $C_1$ centered at $h$ and having radius $1/\sqrt{2}$. All the lines and circles have been shown in Fig. \ref{reg_example}. The desired region around h is the internal  region formed by the intersection of these curves. 
\end{example}

The regions obtained for QPSK-BPSK case are shown in Fig. \ref{reg:sub1} and they match with those in \cite{zhang2016design}. In Fig. \ref{reg:sub2} we show the regions for 8PSK-BPSK scheme. 

\section{Simulation Results}
 The performance of the HePNC scheme is evaluated for AWGN and Rayleigh fading channels. Results are compared with those given in \cite{zhang2016design}. In this paper, AWGN channel indicates a channel with additive Gaussian noise and without phase synchronization of users, i.e., $\gamma = 1$ and $\theta \mytilde \mathit{Unif}[0,2\pi)$. The two performance metrics are Relay Error Rate (RER) and Bit Error Rate (BER). During simulation, the average SNR of the B-R link is kept constant and the average SNR of A-R link is varied. To ensure error probability better than $10^{-3}$ in B-R link, $SNR_{BR} = 7 ~dB$ for AWGN channel and $SNR_{BR} = 25 ~dB$ for Rayleigh fading channel. The performance parameters and the values are chosen to present a comparison with Zhang et al. \cite{zhang2016design}. 
 
 Fig. \ref{DIFF_MAP} shows RER as a function of the number of maps used at relay for QPSK-BPSK PLNC scheme. It can be seen that one denoising map cannot remove all the SFS and hence the performance improves till all three maps are used (i.e all SFS are removed). The error floor behavior is a result of fixing $SNR_{BR}$. In end-to-end BER simulations, if the BER from A to B is $BER_{AB}$ and from B to A is $BER_{BA}$, the overall BER is calculated as the average BER across both users, $BER_{avg} = (2BER_{AB} + BER_{BA})/3$ \cite{zhang2016design}. The RER \& average BER are shown in Fig. \ref{ber_qpsk:test}\ and agree with \cite{zhang2016design}.
 
 For the 8PSK-BPSK PLNC, the average BER is calculated as $BER_{avg} = (3BER_{AB} + BER_{BA})/4$. The overall BER \& RER are shown in Fig. \ref{ber_8psk:test}\ and agree with \cite{zhang2016design}. 
\section{Discussion and Future Work}
The paper extends the framework of \cite{muralidharan2013wireless} to PLNC with heterogeneous PSK modulations. Exact expressions for the number and location of SFS for heterogeneous PSK modulations are given and equations for the boundaries of clustering independent and dependent regions are derived.  

A possible direction for future work would be to investigate QAM-QAM heterogeneous PLNC analytically. Another possible direction is to use the theory of constrained partially filled Latin rectangles for finding the mappings similar to \cite{muralidharan2013wireless}.  
\appendices
\section{Proof of Theorem 2}
\begin{proof}
The proof of Theorem 2 follows by combining the inferences from Lemma 5, 6 and 7 described hereafter.
\end{proof}
Let $p_i = \left\lbrace c_{l,i} | 2^{\delta}i \leq l \leq M_1/2 \right\rbrace$ , $ 1 \leq i \leq M_2/2$, be a group of circles. The region between the outermost circle and the innermost circle in $p_i$ is called the ring formed by $p_i$. Let $c_O$ be the unit circle centered at the origin of fade state plane.
The following lemma is modified version of Lemma 17 in \cite{muralidharan2013wireless}.
\begin{lemma}
The external clustering independent region is the unshaded region obtained when the interior regions of all the circles which belong to the sets $C_{l_1,l_2}$, $1 \leq 2^{\delta}l_2 \leq l_1 \leq  M_1/2$.
\end{lemma}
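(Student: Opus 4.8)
The plan is to convert the analytic description of $\mathcal{R}^{ext}_{CI}$ in (\ref{ext_ci_eq2}) into the geometric union-of-disks picture by rewriting each pairwise–distance constraint one pair at a time. Fix a nonzero difference pair $(x_{k_1,n_1,1},x_{k_2,n_2,2}) \in \Delta\mathcal{S}_1 \times \Delta\mathcal{S}_2$ and write $z = \gamma e^{j\theta}$. I would first dispose of the two degenerate families so that they can be seen to contribute no boundary. When $x_{k_2,n_2,2}=0$ the constraint collapses to $|x_{k_1,n_1,1}| \geq 2\sin(\pi/M_1)$, which holds for every nonzero $x_{k_1,n_1,1}$ because the shortest difference in $\Delta\mathcal{S}_1$ has length $2\sin(\pi/M_1)$. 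When $x_{k_1,n_1,1}=0$ it collapses to $\gamma \geq \sin(\pi/M_1)/\sin(n_2\pi/M_2)$, and since $M_1 \geq M_2$ the right-hand side is at most one, so it is automatic on $\gamma>1$. Hence only pairs with both entries nonzero can carve the region.

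For such a pair I would factor $|x_{k_1,n_1,1} + z\,x_{k_2,n_2,2}| = |x_{k_2,n_2,2}|\,|z - h|$, where $h = -x_{k_1,n_1,1}/x_{k_2,n_2,2}$ is exactly a singular fade state. Using $|x_{k_2,n_2,2}| = 2\sin(n_2\pi/M_2)$ from (\ref{diff_const_eqn}) and $M_1 = 2^{\delta}M_2$, the constraint becomes
\begin{equation}
|z - h| \geq \frac{\sin(\pi/M_1)}{\sin(n_2\pi/M_2)} = \frac{\sin(\pi/M_1)}{\sin(2^{\delta} n_2 \pi/M_1)} = r_{n_2},
\end{equation}
so the excluded set is the interior of one of the circles in $C_{n_1,n_2}$, whose centre $h$ sits on the circle $c_{n_1,n_2}$ of radius $\sin(n_1\pi/M_1)/\sin(n_2\pi/M_2)$. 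Invoking Lemma 1, the modulus $|h|$ fixes the pair $(n_1,n_2)$ uniquely, so each singular fade state carries one well-defined radius and the map from constraints to circles of $\bigcup_{l_1,l_2} C_{l_1,l_2}$ is consistent. Intersecting over all pairs then identifies $\mathcal{R}^{ext}_{CI}$ with the set of points of modulus exceeding one that lie outside every such disk, i.e. the unshaded region.

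The main obstacle is justifying the index restriction $1 \leq 2^{\delta}l_2 \leq l_1 \leq M_1/2$, namely that for the external region it suffices to retain only the circles of $C_{l_1,l_2}$ whose centres have modulus at least one. The delicacy is that the disk attached to a singular fade state $h$ reaches out to radius $|h| + \sin(\pi/M_1)/\sin(n_2\pi/M_2)$ from the origin, and for an internal state ($|h|<1$) this quantity can exceed one, so disks centred inside the unit circle may protrude into $\gamma>1$. One therefore has to analyse, ring $p_i$ by ring $p_i$, whether such protrusions are absorbed by the external disks or whether their arcs genuinely survive on the boundary of $\mathcal{R}^{ext}_{CI}$; controlling this envelope is precisely what Lemmas 6 and 7 are designed to supply when assembling Theorem 2. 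I expect this envelope/containment analysis, rather than the algebraic reduction of the first two paragraphs, to be the crux, and it is the step that must be carried out with care since the restriction to external circles is far from obvious from the defining inequality alone.
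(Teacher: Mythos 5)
Your first two paragraphs are, in substance, exactly the paper's proof: the paper likewise divides each constraint in (\ref{ext_ci_eq2}) by $x_{k_2,n_2,2}$ to arrive at (\ref{ext_ci_eq1}), reads each inequality $|\gamma e^{j\theta}-c|\geq r$ as the exterior of a circle of radius $\sin(\pi/M_1)/\sin(2^{\delta}n_2\pi/M_1)$ centred at an SFS, and stops there; your explicit disposal of the degenerate pairs with a zero entry is a point of care the paper skips silently. The genuine problem is your third paragraph. You correctly identify the index restriction $1 \leq 2^{\delta}l_2 \leq l_1 \leq M_1/2$ (keeping only circles centred at SFS of modulus at least one) as the nontrivial content of the printed statement, but your plan for it --- showing that the protrusions of the internally-centred disks into $\gamma>1$ are absorbed by the external disks, with Lemmas 6 and 7 supplying the containment --- cannot be carried out, because the restricted statement is false. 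Take QPSK--BPSK ($M_1=4$, $M_2=2$, $\delta=1$): the only admissible index pair is $(l_1,l_2)=(2,1)$, so the restricted shading uses exactly the four circles of radius $1/\sqrt{2}$ centred at $(\pm 1\pm j)/\sqrt{2}$; moreover $c_{M_1/2,1}=c_O$ whenever $M_2=2$, so Lemma 7 is vacuous and offers no help. The point $z=1.2$ on the real axis has $\gamma>1$ and lies outside all four of those disks ($|1.2-(1+j)/\sqrt{2}|\approx 0.862 > 1/\sqrt{2}$), hence is unshaded, yet it violates the defining constraint of (\ref{ext_ci_eq2}) arising from the \emph{internal} SFS $h=1/\sqrt{2}$ (take $d_1=-\sqrt{2}j$, $d_2=2j$), since $|1.2-1/\sqrt{2}|\approx 0.493 < 1/\sqrt{2}$. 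Indeed, along $\theta=0$ the true boundary of $\mathcal{R}^{ext}_{CI}$ is the protruding arc of that internal disk, extending out to $\gamma=\sqrt{2}$, so the internal arcs genuinely survive on the boundary rather than being swallowed.

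The resolution is therefore not an envelope or containment argument but a corrected reading of the statement: the shading must use the circles of radius $\sin(\pi/M_1)/\sin(2^{\delta}n_2\pi/M_1)$ centred at \emph{all} nonzero SFS, the internal ones included --- which is precisely what the paper's own rewriting (\ref{ext_ci_eq1}) does, since it quantifies over every nonzero difference pair; the index restriction in the printed lemma is a misstatement carried over from the homogeneous setting of \cite{muralidharan2013wireless} and it fails in the heterogeneous case (at least whenever $M_2=2$), a caveat that propagates to the outer-envelope formulation of Theorem 2 as well. Your paragraphs one and two, which are correct, already establish the all-circles version of the lemma --- and that is also all the paper's two-line proof establishes; the role of Lemmas 6 and 7 in the paper is only to show the shaded set has no unshaded holes between the concentric SFS circles, not to justify discarding the internally-centred disks. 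You should drop the third paragraph's program and instead either prove the lemma in its all-circles form or add the observation above showing the restricted form cannot hold.
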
 
\begin{proof}
From (\ref{diff_const_eqn}), $|x_{ki,ni,i}|= 2sin(\pi n_i/M_i)$ for $i = 1,2$.
From (\ref{ext_ci_eq2}) we have
\begin{align}
\mathcal{R}^{ext}_{CI} &=  \lbrace \gamma e^{j\theta} : |\frac{x_{k1,n1,1}}{x_{k2,n2,2}}+\gamma e^{j\theta}| \geq \frac{sin\left(\frac{\pi}{M_1}\right)}{sin\left(\frac{2^{\delta}n_2\pi}{M_1}\right)}, \nonumber\\ &\forall (0,0) \neq (x_{k1,n1,1},x_{k2,n2,2}) \in \Delta\mathcal{S}_1 \times \Delta\mathcal{S}_2 , \nonumber\\ &\gamma > 1 , -\pi \leq \theta < \pi \rbrace
\label{ext_ci_eq1}
\end{align}
 The equation $|\gamma e^{j\theta} - c | \geq r $ is the exterior region of the circle centered at $c$ with radius $r$. Hence the result follows.  
\end{proof}
It can be verified that the Lemma 19 and Lemma 20 of \cite{muralidharan2013wireless} can be adapted for our case, with the modified definition of $c_{l_1,l_2}$ and $C_{l_1,l_2}$. Their statements are provided for completeness.
\begin{lemma}
The rings formed by $p_i$ , $1 \leq i \leq M_2/2$ are fully shaded.
\end{lemma}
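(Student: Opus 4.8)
The plan is to show that the annular ring bounded by the innermost and outermost circles of the group $p_i$ is entirely contained in the union of the interiors of the denoising circles belonging to the sets $C_{l,i}$ (those whose second index equals $i$), each of which has the common radius $r_i = \sin(\pi/M_1)/\sin(i\pi/M_2)$; by Lemma 5 this is precisely what it means for the ring to be "fully shaded." First I would record the radii involved. Writing $r_{l,i} = \sin(l\pi/M_1)/\sin(i\pi/M_2)$ for the radius of $c_{l,i}$, the quantity $r_{l,i}$ is strictly increasing in $l$ on the range $2^{\delta}i \leq l \leq M_1/2$ (the arguments lie in $(0,\pi/2]$), the innermost circle $c_{2^{\delta}i,i}$ is the unit circle (since $2^{\delta}i\pi/M_1 = i\pi/M_2$ gives $r_{2^{\delta}i,i}=1$), and the outermost circle $c_{M_1/2,i}$ has radius $1/\sin(i\pi/M_2)$. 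So the ring to be covered is the annulus $1 \leq |z| \leq 1/\sin(i\pi/M_2)$.

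The core step is a single-circle covering claim: the $M_1$ disks of radius $r_i$ centered at the equispaced SFS on $c_{l,i}$, which by Lemma 2 are separated by angle $2\pi/M_1$, jointly cover the full annulus $r_{l-1,i} \leq |z| \leq r_{l+1,i}$. To prove this I would fix a radius $\rho$ and observe that, by rotational symmetry, the point at radius $\rho$ hardest to cover is the one at the angular midpoint between two adjacent centers; the squared distance from that point to an adjacent center at radius $R = r_{l,i}$ is $\rho^2 + R^2 - 2\rho R\cos(\pi/M_1)$, so it is covered iff $\rho^2 - 2R\cos(\pi/M_1)\rho + R^2 - r_i^2 \leq 0$. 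The two roots of this quadratic are $R\cos(\pi/M_1) \pm \sqrt{r_i^2 - R^2\sin^2(\pi/M_1)}$, and substituting $R = \sin(l\pi/M_1)/\sin(i\pi/M_2)$ and $r_i = \sin(\pi/M_1)/\sin(i\pi/M_2)$ collapses the discriminant to $\sin(\pi/M_1)\cos(l\pi/M_1)/\sin(i\pi/M_2)$ (using $1-\sin^2(l\pi/M_1)=\cos^2(l\pi/M_1)$ and $\cos(l\pi/M_1)\geq 0$ for $l\leq M_1/2$). The roots then telescope, via the identity $\sin(l\pi/M_1)\cos(\pi/M_1)\pm\sin(\pi/M_1)\cos(l\pi/M_1)=\sin((l\pm1)\pi/M_1)$, to exactly $r_{l-1,i}$ and $r_{l+1,i}$. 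This telescoping identity is the heart of the argument.

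Finally I would chain these bands. Since $r_{l-1,i}<r_{l,i}<r_{l+1,i}$, the band of $c_{l,i}$ and that of $c_{l+1,i}$ overlap on $[r_{l,i},r_{l+1,i}]$, so the union of the bands for $l=2^{\delta}i,\dots,M_1/2-1$ is the single annulus $r_{2^{\delta}i-1,i}\leq|z|\leq r_{M_1/2,i}$, which contains $1\leq|z|\leq 1/\sin(i\pi/M_2)$, i.e. the whole ring; hence the ring is fully shaded.

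The main obstacle I anticipate is the outermost circle: for $l=M_1/2$ one has $\cos(l\pi/M_1)=0$, so the band of $c_{M_1/2,i}$ degenerates (its disks are only tangent along the midpoint directions) and cannot by itself reach the outer boundary. The resolution, which I would highlight, is that the outer boundary radius $r_{M_1/2,i}$ is already the upper endpoint of the band supplied by the next circle inward, $c_{M_1/2-1,i}$, for which $\cos((M_1/2-1)\pi/M_1)=\sin(\pi/M_1)>0$; thus the chain up to $l=M_1/2-1$ already suffices and the top circle is redundant for coverage. I would also dispose of the degenerate case $i=M_2/2$ separately, where $2^{\delta}i=M_1/2$ forces the group to consist of the single unit circle and the ring collapses, so there is nothing to shade.
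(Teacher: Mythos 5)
Your proof is correct and takes the route the paper intends: the paper's own ``proof'' of this lemma is a one-line assertion that Lemma 19 of Muralidharan et al.\ \cite{muralidharan2013wireless} can be adapted with the modified $c_{l_1,l_2}$ and $C_{l_1,l_2}$, and your telescoping-band argument (worst case at the angular midpoint, quadratic roots collapsing to $r_{l\pm 1,i}$ via $\sin(l\pi/M_1)\cos(\pi/M_1)\pm\cos(l\pi/M_1)\sin(\pi/M_1)=\sin((l\pm 1)\pi/M_1)$, then chaining the overlapping annuli from the unit circle $c_{2^{\delta}i,i}$ out to $c_{M_1/2,i}$) is precisely that verification written out in full. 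Your explicit treatment of the degenerate outermost band ($l=M_1/2$, where the disks are only tangent and the $l=M_1/2-1$ band already reaches the outer boundary) and of the collapsed case $i=M_2/2$ supplies details the paper leaves entirely to the citation.
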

Among the circles $c_{k_1,k_2}$, $k_1 \neq 2^{\delta}k_2$, $c_{M_1/2,1}$ is the outermost.
\begin{lemma}
The region between circles $c_{M_1/2,1}$ and $c_O$ is fully shaded.
\end{lemma}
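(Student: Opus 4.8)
The plan is to prove the statement geometrically, by showing that every point of the annulus bounded by $c_O$ (radius $1$) and $c_{M_1/2,1}$ (radius $1/\sin(\pi/M_2)$) lies inside at least one shading disc, so that nothing of it survives in $\mathcal{R}^{ext}_{CI}$. I work only with the discs of the families $C_{l,1}$, $2^{\delta}\le l\le M_1/2$: these are centred at the $M_1$ equispaced SFS on the circle $c_{l,1}$ of radius $\rho_l\triangleq\sin(l\pi/M_1)/\sin(\pi/M_2)$, and each has the common radius $r_1=\sin(\pi/M_1)/\sin(\pi/M_2)$. Since $\rho_{2^{\delta}}=\sin(2^{\delta}\pi/M_1)/\sin(\pi/M_2)=1$ and $\rho_{M_1/2}=1/\sin(\pi/M_2)$, this annulus is precisely the ring formed by $p_1$, so the claim is the $i=1$ instance of Lemma 6; what remains is to verify that the coverage genuinely stitches together all the way from $c_O$ out to $c_{M_1/2,1}$.

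First I would establish \emph{radial} overlap of consecutive families. Reducing $\rho_{l+1}-\rho_l\le 2r_1$ by the sum-to-product identity $\sin\!\left(\tfrac{(l+1)\pi}{M_1}\right)-\sin\!\left(\tfrac{l\pi}{M_1}\right)=2\cos\!\left(\tfrac{(2l+1)\pi}{2M_1}\right)\sin\!\left(\tfrac{\pi}{2M_1}\right)$ turns it into $\cos\!\left(\tfrac{(2l+1)\pi}{2M_1}\right)\le 2\cos\!\left(\tfrac{\pi}{2M_1}\right)$, which is immediate since the left side is at most $1$ while $2\cos(\pi/2M_1)>1$ for $M_1\ge2$. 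Hence the radial shadows $[\rho_l-r_1,\rho_l+r_1]$ of successive families overlap; as the innermost family sits exactly on $c_O$ and the outermost on $c_{M_1/2,1}$, every radius in the annulus meets some disc.

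The \emph{angular} step is the crux. Within a single family the two nearest centres are a chord $2\rho_l\sin(\pi/M_1)$ apart; comparing with $2r_1$ shows adjacent discs overlap exactly when $\sin(l\pi/M_1)<1$, i.e.\ for every intermediate circle $l<M_1/2$, whereas on the outermost circle $l=M_1/2$ they are only tangent and leave scalloped notches just inside radius $1/\sin(\pi/M_2)$. These notches are filled by the inner neighbour: taking $n_2=1$ (odd) in Lemma 2, the offset $\phi$ is $0$ when $l$ is odd and $\pi/M_1$ when $l$ is even, so as $l$ runs through consecutive integers the centres of successive families are interleaved by half a spacing, $\pi/M_1$. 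In particular the centres of $C_{M_1/2-1,1}$ sit at the notch (mid-gap) angles of $C_{M_1/2,1}$ at radius $\rho_{M_1/2-1}=\cos(\pi/M_1)/\sin(\pi/M_2)$, and such a disc reaches outward to $\rho_{M_1/2-1}+r_1\ge 1/\sin(\pi/M_2)$ precisely because $\sin(\pi/M_1)+\cos(\pi/M_1)\ge1$ for all $M_1\ge2$; thus each tangency notch of the outer family is swallowed. The innermost family ($l=2^{\delta}<M_1/2$ when $M_2>2$) has genuinely overlapping discs, which seals the band immediately outside $c_O$, while the case $M_2=2$ is vacuous because then $1/\sin(\pi/M_2)=1$ and $c_{M_1/2,1}$ coincides with $c_O$.

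Putting the radial overlap, the anchoring on $c_O$, and the angular interleaving together shows that every point of the annulus lies in some shading disc, i.e.\ the region between $c_{M_1/2,1}$ and $c_O$ is fully shaded. I expect the angular step to be the main obstacle: one must read the alternating phase offset $\phi\in\{0,\pi/M_1\}$ off Lemma 2 to see that the inner family lands exactly at the gap angles of the tangent outer family, and then verify the outward-reach inequality $\sin(\pi/M_1)+\cos(\pi/M_1)\ge1$; the remaining radial and anchoring estimates are routine trigonometric bookkeeping.
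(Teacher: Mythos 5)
The paper never actually proves this lemma: together with Lemma~6 it is stated with the remark that Lemmas~19 and~20 of \cite{muralidharan2013wireless} ``can be verified'' to adapt under the modified definitions of $c_{l_1,l_2}$ and $C_{l_1,l_2}$. So your self-contained covering argument is a genuinely different route, and most of its ingredients check out. Your identification of the annulus with the ring formed by $p_1$ is correct (indeed, since $\rho_{2^\delta}=1$, under this paper's definitions the lemma \emph{is} the $i=1$ instance of Lemma~6 --- a redundancy inherited from transplanting \cite{muralidharan2013wireless}); the tangency of adjacent discs on $c_{M_1/2,1}$, the half-spacing interleaving read off Lemma~2 (with $n_2=1$, $\phi$ alternates between $0$ and $\pi/M_1$ with the parity of $n_1$), the reach inequality $\sin(\pi/M_1)+\cos(\pi/M_1)\geq 1$, and the observation that the case $M_2=2$ is vacuous (which covers the paper's own QPSK--BPSK and 8PSK--BPSK examples) are all correct. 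In fact your notch argument is even tighter than you state: writing $\alpha=\pi/M_1$ and $\rho=1/\sin(\pi/M_2)$, the interleaved centers of $C_{M_1/2-1,1}$ sit at radius $\rho\cos\alpha$, which is \emph{exactly} the tangency point of the two adjacent outer discs.

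The genuine gap is the interior stitching. The bound $\rho_{l+1}-\rho_l\leq 2r_1$ shows only that every \emph{radius} meets some disc, not that every point at that radius is covered: a disc of radius $r_1$ centered at radius $\rho_l$ reaches radius $\rho_l+r_1$ only at its own center angle, so with the half-spacing interleaving you yourself invoke, two families whose radial shadows merely overlap in the $2r_1$ sense can leave mid-angle holes at radii between $\rho_l$ and $\rho_{l+1}$; likewise, pairwise overlap of adjacent discs within a family yields a wavy connected band, not the full band $[\rho_l-r_1,\rho_l+r_1]$. Your sketch patches only the outermost tangency notches and leaves these potential interior holes unaddressed. The repair stays inside your toolkit: family $C_{l,1}$ covers the \emph{complete} circle of radius $t$ (all angles) iff the mid-gap point satisfies $t^2+\rho_l^2-2t\rho_l\cos\alpha\leq r_1^2$, i.e.\ iff $t\in[\rho_l\cos\alpha-s_l,\,\rho_l\cos\alpha+s_l]$ with $s_l=\sin\alpha\cos(l\alpha)/\sin(\pi/M_2)$. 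Consecutive intervals overlap because $(\rho_{l+1}-\rho_l)\cos\alpha\leq s_l+s_{l+1}$ reduces, after sum-to-product and cancelling $2\cos((2l+1)\alpha/2)>0$, to $\cos\alpha\leq 2\cos^2(\alpha/2)=1+\cos\alpha$, which always holds; the first interval starts at $\cos\alpha-\sin\alpha\cot(\pi/M_2)\leq 1$, and the interval for $l=M_1/2-1$ ends at $\rho\cos^2\alpha+\rho\sin^2\alpha=\rho$, which is your notch computation strengthened to all angles (so the $l=M_1/2$ family is not even needed near the rim). With this replacement --- and noting that the true inter-family gap is in fact $\rho_{l+1}-\rho_l\leq r_1$, since the ratio to $r_1$ is $\cos((2l+1)\alpha/2)/\cos(\alpha/2)\leq 1$ --- your architecture yields a complete and correct proof.
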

\section*{Acknowledgment}
This work was supported partly by the Science and Engineering Research Board (SERB) of Department of Science and Technology (DST), Government of India, through J. C. Bose National Fellowship to B. Sundar Rajan.

\end{document}